\documentclass[submission,copyright,creativecommons]{eptcs}

\usepackage[all]{xy}

\DeclareMathAlphabet{\mathcal}{OMS}{cmsy}{m}{n}

\usepackage{theorem}

\newcommand{\eop}{\hspace*{\fill}$\Box$}
\def\qed{\eop}
\newtheorem{definition}{Definition}[section]
\newtheorem{lemma}[definition]{Lemma}
\newtheorem{theorem}[definition]{Theorem}
{\theorembodyfont{\rmfamily}
  \newtheorem{example}[definition]{\it Example}
  \newtheorem{proof}{\it Proof.}
}


\usepackage{amsmath}
\usepackage{amssymb}
\usepackage{stmaryrd}

\def\cC{\mathcal{C}}
\def\cD{\mathcal{D}}
\def\cF{\mathcal{F}}
\def\cG{\mathcal{G}}
\def\cN{\mathcal{N}}
\def\cP{\mathcal{P}}
\def\cR{\mathcal{R}}
\def\cT{\mathcal{T}}
\def\cV{\mathcal{V}}
\def\Root{\mathit{root}}
\def\Arity{\mathit{arity}}
\def\Var{{\mathcal{V}\mathit{ar}}}
\def\Terms{\cT}
\def\Pos{{\mathcal{P}\mathit{os}}}
\def\Dom{{\mathcal{D}\mathit{om}}}
\def\Range{{\mathcal{R}\mathit{an}}}
\def\VRange{{\mathcal{VR}\mathit{an}}}
\def\Hole{\Box}
\def\Subst{\mathit{Subst}}

\def\Pcomp{\mathrel{\Uparrow}}
\def\Fail{\mathit{fail}}
\def\Scomp{\mathrel{\cdot}}
\def\ScompSym{\mathrel{\bullet}}
\def\mgu{\mathit{mgu}}

\def\CALL{\textsc{rec}}
\def\EmptysetSym{\varnothing}

\def\tto{\twoheadrightarrow}
\def\Condition#1#2{#1 \tto #2}

\def\Rule#1#2{#1 \to #2}
\def\SSGRule#1#2{#1 \to #2}

\def\CRule#1#2#3{\Rule{#1}{#2} \Leftarrow #3}
\def\CCPair#1#2#3{\langle #1, #2 \rangle \Leftarrow #3}

\def\ID{\mathit{id}}
\def\symb#1{\mathsf{#1}}
\def\metasymb#1{\mathit{#1}}
\def\var#1{\mathit{#1}}

\def\Tend{\Terms(\{\top,\Eand\})}
\def\Eqn#1#2{#1 \approx #2}
\def\BraCondition#1#2{(\Condition{#1}{#2})}
\def\Eqns{\cdots}
\newcommand{\Eand}[1][]{\mathrel{\&}}
\def\ApplySubst#1#2{{#1(#2)}}
\def\applySubst#1#2{{#1 #2}}
\def\Lang{L}


\newcommand{\Semantics}[3][]{[\![\,#2\,]\!]}
\newcommand{\RenameFresh}[2]{\mathit{fresh}_{#2}(#1)}

\def\SubstSet#1#2{[\![\Lang(#1, #2)]\!]}

\newcommand{\cto}{\stackrel{\mathit{c}}{\to}\hspace{-3pt}{}}
\newcommand{\ileadsto}{\stackrel{\mathit{i}}{\leadsto}\hspace{-3pt}{}}

\newcommand{\Coding}[2]{\left[\,#1,~#2\,\right]}
\newcommand{\Pairsymb}[2]{\symb{#1 #2}}
\newcommand{\NT}[2]{{#1#2}}
\newcommand{\CodingNT}[3]{{\langle #1,~#2 \rangle}_{#3}}
\newcommand{\Patterns}[1]{{\mathit{Patterns}}(#1)}
\newcommand{\Vars}[1]{{\mathit{Vars}}(#1)}
\def\cRgcd{\cR_1}

\def\cGgcd{\cG_1}
\def\cPgcd{\cP_1}
\def\cGex{\cG_2}

\def\cGgcddomain{\cG_4}
\def\cGcex{\cG_5}


\title{On Transforming Narrowing Trees into Regular Tree Grammars Generating Ranges of Substitutions%
\thanks{This work was partially supported by JSPS KAKENHI Grant Number JP17H01722.} 
}

\author{Naoki Nishida
\institute{Graduate School of Informatics\\
Nagoya University\\
Nagoya, Japan}
\email{nishida@i.nagoya-u.ac.jp}
\and
Yuya Maeda
\institute{Graduate School of Informatics\\
Nagoya University\\
Nagoya, Japan}
\email{yuya@trs.css.i.nagoya-u.ac.jp}
}

\begin{document}
\maketitle

\begin{abstract}
The grammar representation of a narrowing tree for a syntactically deterministic conditional term rewriting system and a pair of terms is a regular tree grammar that generates expressions for substitutions obtained by all possible innermost-narrowing derivations that start with the pair and end with particular non-narrowable terms.
In this paper, under a certain syntactic condition, we show a transformation of the grammar representation of a narrowing tree into another regular tree grammar that overapproximately generates the ranges of ground substitutions generated by the grammar representation.
In our previous work, such a transformation is restricted to the ranges w.r.t.\ a given single variable, and thus, the usefulness is limited.
We extend the previous transformation by representing the range of a ground substitution as a tuple of terms, which is obtained by the coding for finite trees.
We show a precise definition of the transformation and prove that the language of the transformed regular tree grammar is an overapproximation of the ranges of ground substitutions generated by the grammar representation.
We leave an experiment to evaluate the usefulness of the transformation as future work.
\end{abstract}

\section{Introduction}
\label{sec:intro}

\emph{Conditional term rewriting}~\cite[Chapter~7]{Ohl02} is known to be more complicated
than unconditional term rewriting in the sense of analyzing properties,
e.g., \emph{operational termination}~\cite{LMM05},
\emph{confluence}~\cite{SMI95}, and \emph{reachability}~\cite{FG03}.
A popular approach to the analysis of conditional rewriting is to transform a conditional term rewriting system (a CTRS, for short) into an unconditional term rewriting
system (a TRS, for short) that is in general an overapproximation of the CTRS in
terms of reduction. 
This approach enables us to use existing techniques for the analysis of TRSs.
For example, a CTRS is operationally terminating if the \emph{unraveled} TRS~\cite{Mar96,Ohl02}
 is terminating~\cite{DLMMU04}.
To prove termination of the
unraveled TRS, we can use many techniques for proving termination of TRSs (cf.~\cite{Ohl02}).
On the other hand, it is not so easy to analyze \emph{reachability} which is relevant to, e.g., \emph{(in)feasibility} of conditions. 

Let us consider to prove confluence of the following \emph{syntactically deterministic 3}-CTRS~\cite[Example~7.1.5]{Ohl02} defining the \emph{gcd} operator over the natural numbers represented by $\symb{0}$ and $\symb{s}$:
\[
	\cRgcd =
	\left\{
	\begin{array}{r@{\>}c@{\>}l@{~~~~~~~~}r@{\>}c@{\>}l}
	\Rule{x < \symb{0} &}{& \symb{false}}, &
	\Rule{\symb{0} - \symb{s}(y) &}{& \symb{0}}, \\
	\Rule{\symb{0} < \symb{s}(y) &}{& \symb{true}}, &
	\Rule{x - \symb{0} &}{& x}, \\
	\Rule{\symb{s}(x) < \symb{s}(y) &}{& x < y}, &
	\Rule{\symb{s}(x) - \symb{s}(y) &}{& x - y}, \\
	\Rule{\symb{gcd}(x,x) &}{& x}, \\
	\Rule{\symb{gcd}(\symb{s}(x),\symb{0}) &}{& \symb{s}(x)}, &
	\CRule{\symb{gcd}(\symb{s}(x),\symb{s}(y)) &}{& \symb{gcd}(x - y, \symb{s}(y))}{\Condition{y < x}{\symb{true}}}, \\
	\Rule{\symb{gcd}(\symb{0},\symb{s}(y)) &}{& \symb{s}(y)}, &
	\CRule{\symb{gcd}(\symb{s}(x),\symb{s}(y)) &}{& \symb{gcd}(\symb{s}(x), y - x)}{\Condition{x < y}{\symb{true}}} \\
	\end{array}
	\right\}
\]	
A transformational approach in~\cite{GNG13iwc,GN14wpte} does not succeed in proving confluence of $\cRgcd$.
On the other hand, a direct approach to reachability analysis to prove \emph{infeasibility} of the \emph{conditional critical pairs} (i.e., non-existence of substitutions satisfying conditions), which is implemented in some confluence provers, does not prove confluence of $\cRgcd$ well, either.
Let us consider the critical pairs of $\cRgcd$:
\[
	\begin{array}{@{}c@{~}c@{~~}c@{~}l@{}}
	\CCPair{& \symb{s}(x)}{& \symb{gcd}(x - x, \symb{s}(x)) &}{\Condition{x < x}{\symb{true}}},
	\\
	\CCPair{& \symb{gcd}(x - x, \symb{s}(x))}{& \symb{s}(x) &}{\Condition{x < x}{\symb{true}}},
	\\
	\CCPair{& \symb{s}(x)}{& \symb{gcd}(\symb{s}(x), x - x) &}{\Condition{x < x}{\symb{true}}},
	\\
	\CCPair{& \symb{gcd}(\symb{s}(x), x - x)}{& \symb{s}(x) &}{\Condition{x < x}{\symb{true}}},
	\\
	\CCPair{& \symb{gcd}(x - y, \symb{s}(y))}{& \symb{gcd}(\symb{s}(x), y - x) &}{\Condition{x < y}{\symb{true}},~\Condition{y < x}{\symb{true}}},
	\\
	\CCPair{& \symb{gcd}(\symb{s}(x), y - x)}{& \symb{gcd}(x - y, \symb{s}(y)) &}{\Condition{x < y}{\symb{true}},~\Condition{y < x}{\symb{true}}}
	\\
	\end{array}
\]
Note that the above critical pairs are symmetric because they are caused by overlaps at the root position only.
An operationally terminating CTRS is confluent if all critical pairs of the CTRS are infeasible (cf.~\cite{BN98,DOS88}).
Operational termination of $\cRgcd$ can be proved by, e.g., \textsf{AProVE}~\cite{AProVE}.
To prove infeasibility of the critical pairs above, it suffices to show 
both (i) non-existence of terms $t$ such that $t < t \mathrel{\to^*_{\cRgcd}} \symb{true}$, and (ii) non-existence of terms $t_1,t_2$ such that $t_1 < t_2 \mathrel{\to^*_{\cRgcd}} \symb{true}$ and $t_2 < t_1 \mathrel{\to^*_{\cRgcd}} \symb{true}$.
Thanks to the meaning of $<$, it would be easy for a human to notice that such terms $t,t_1,t_2$ do not exist.
However, it is not so easy to mechanize a way to show non-existence of $t,t_1,t_2$.
In fact, confluence provers for CTRSs,
\textsf{ConCon}~\cite{concon-2014}, \textsf{CO3}~\cite{co3-2015}, and \textsf{CoScart}~\cite{coscart-2015}, based on e.g., transformations of CTRSs into TRSs or reachability analysis for infeasibility of conditional critical pairs, failed to prove confluence of $\cRgcd$ (see 
Confluence Competition 2016,
2017,
and 2018,%
\footnote{ \url{http://cops.uibk.ac.at/results/?y=2018&c=CTRS}}
\texttt{327.trs}).
In addition, a \emph{semantic approach} in~\cite{Luc17iwc,Luc17lopstr} cannot prove confluence of $\cRgcd$ using \textsf{AGES}~\cite{GLR16}, a tool for generating logical models of order-sorted first-order theories---non-existence of $t_1,t_2$ above cannot be proved via its web interface with default parameters.
\textsf{Timbuk 3.2}~\cite{Timbuk01}, which is based on tree automata techniques~\cite{GR10}, cannot prove infeasibility of $\Condition{x < y}{\symb{true}},~\Condition{y < x}{\symb{true}}$ w.r.t.\ the rules for $<$ under the default use.

The non-existence of a term $t$ with $t < t \mathrel{\to^*_{\cRgcd}} \symb{true}$  can be reduced to the non-existence of substitutions $\theta$ such that $x < x \mathrel{\leadsto^*_{\theta,\cRgcd}} \symb{true}$, where $\leadsto$ denotes the \emph{narrowing} step~\cite{Hul80}---%
for example, $x < y \mathrel{\leadsto_{\{x \mapsto \symb{0}, ~ y \mapsto \symb{s}(y')\},\cRgcd}} \symb{true}$.
In addition, the non-existence of such substitutions can be reduced to the emptiness of the set of the substitutions, i.e., the emptiness of $\{ \theta \mid x < x \mathrel{\leadsto^*_{\theta,\cRgcd}} \symb{true} \}$.
From this viewpoint, for a pair of terms, the enumeration of substitutions obtained by narrowing would be useful in analyzing rewriting that starts with instances of the pair.
To analyze sets of substitutions derived by \emph{innermost} narrowing, \emph{narrowing trees}~\cite{NV13} are useful. 
For example, infeasibility of conditional critical pairs of some normal 1-CTRS can be proved by using the \emph{grammar representation} of a narrowing tree~\cite{NM18fscd}.
Simplification of the grammar representation implies the non-existence of substitutions satisfying the conditional part of a critical pair.
However, there are some examples (shown later) for which the simplification method in~\cite{NM18fscd} does not succeed in converting grammar representations to those explicitly representing the empty set.

In this paper, under a certain syntactic condition, we show a transformation of the grammar representation of a narrowing tree into a \emph{regular tree grammar}~\cite{TATA} (an RTG, for short) that overapproximately generates the ranges of ground substitutions generated by the grammar representation.
The aim of the transformation is to simplify grammar representations as much as possible together with the existing one in~\cite{NM18fscd}.

Let $\cR$ be a \emph{syntactically deterministic} 3-CTRS (a 3-SDCTRS, for short) that is a constructor system, $s$ a \emph{basic} term, and $t$ a constructor term, where basic terms are of the form $\metasymb{f}(u_1,\ldots,u_n)$ with a defined symbol $\metasymb{f}$ and constructor terms $u_1,\ldots,u_n$.
A \emph{narrowing tree}~\cite{NV13,NM18fscd} of $\cR$ with the root pair $\Condition{s}{t}$ is a finite representation that defines the set of substitutions $\theta$ such that 
the pair $\Condition{s}{t}$ narrows to a particular ground term $u_\top$ consisting of a special binary symbol $\Eand$ and a special constant $\top$ by \emph{innermost} narrowing ${\ileadsto_\cR}$ with a substitution $\theta$ (i.e., $\BraCondition{s}{t} \mathrel{\ileadsto^*_{\theta,\cR}} u_\top$ and thus $\applySubst{\theta}{s} \mathrel{\cto^*_\cR} \applySubst{\theta}{t}$).
Note that $\Condition{}{}$ is considered a binary symbol, $\Rule{\BraCondition{x}{x}}{\top}$ is assumed to be implicitly included in $\cR$, and $\cto_\cR$ denotes the \emph{constructor-based rewriting} step which applies rewrite rules to basic terms.
Such a narrowing tree can be the enumeration of substitutions obtained by innermost narrowing of $\cR$ to ground terms consisting of $\Eand$ and $\top$.
The idea of narrowing trees has been extended to finite representations of SLD trees for logic programs~\cite{NV15}.

Using narrowing trees, it is easy to see that there is no substitution $\theta$ such that $x < x \mathrel{\ileadsto^*_{\theta,\cRgcd}} \symb{true}$, and hence the above four critical pairs with $\Condition{x < x}{\symb{true}}$ are infeasible.
Let us now consider to prove infeasibility of $\Condition{x < y}{\symb{true}},~\Condition{y < x}{\symb{true}}$.
A narrowing tree for $\Condition{x < y}{\symb{true}} \Eand \Condition{y < x}{\symb{true}}$ can be represented by the following \emph{grammar representation}~\cite{NV13,NM18fscd} that can be considered an RTG (see Section~\ref{sec:grammar_representation}):
\begin{equation}
\label{eqn:gcd}
	\begin{array}{@{}r@{\>\>}r@{\>\>}l@{}}
	\SSGRule{\Gamma_{\Condition{x < y}{\symb{true}} \Eand \Condition{y < x}{\symb{true}}} &}{& \Gamma_{\Condition{x < y}{\symb{true}}} \Eand \Gamma_{\Condition{y < x}{\symb{true}}}} 
	\\[3pt]
	\SSGRule{\Gamma_{\Condition{x < y}{\symb{true}}} &}{& 
	\{ x \mapsto \symb{0}, ~ y \mapsto \symb{s}(y_2) \}
	\\
	&
	\mid
	&
	\CALL( \Gamma_{\Condition{x < y}{\symb{true}}} , \{ x_3 \mapsto x, ~ y_3 \mapsto y \} ) \ScompSym \{ x \mapsto \symb{s}(x_3), ~ y \mapsto \symb{s}(y_3) \}
	}
	\\[3pt]
	\SSGRule{\Gamma_{\Condition{y < x}{\symb{true}}} &}{& \CALL(\Gamma_{\Condition{x < y}{\symb{true}}},\{ x\mapsto y, ~ y \mapsto x \})}
	\\
 \end{array}
\end{equation}
We denote by $\cGgcd$ the RTG
with the initial non-terminal $\Gamma_{\Condition{x < y}{\symb{true}} \Eand \Condition{y < x}{\symb{true}}}$, the other non-terminals $\Gamma_{\Condition{x < y}{\symb{true}}}, \Gamma_{\Condition{y < x}{\symb{true}}}$, and the above production rules.
We also denote by $\cPgcd$ the set of the above production rules, i.e., (\ref{eqn:gcd}).
Substitutions are considered constants, and the RTG generates terms over $\Eand$, $\EmptysetSym$, $\ScompSym$, $\CALL$, and substitutions.
The binary symbols $\ScompSym$ and $\Eand$ are interpreted by standard composition and \emph{parallel composition}~\cite{HR89,Pal90}, 
respectively.
Parallel composition $\Pcomp$ of two substitutions returns a most general unifier of the substitutions if the substitutions are unifiable (see Definition~\ref{def:pcomp}).
For example, $\{y' \mapsto \symb{a}, ~ y \mapsto \symb{a}\} \Pcomp \{ y' \mapsto y\}$ returns $\{y' \mapsto \symb{a}, ~ y \mapsto \symb{a}\}$ and $\{y' \mapsto \symb{a}, ~ y \mapsto \symb{b}\} \Pcomp \{ y' \mapsto y\}$ fails.
The symbol $\CALL$ is used for recursion, 
which is interpreted as standard composition of a renaming and a substitution recursively generated.
To simplify the discussion in the remainder of this section, following the meaning of the operators, we simplify the rules of $\Gamma_{\Condition{x < y}{\symb{true}}}$ and $\Gamma_{\Condition{y < x}{\symb{true}}}$ as follows:
\begin{equation}
\label{eqn:gcd-simple}
	\begin{array}{@{}r@{\>\>}r@{\>\>}l@{}}
	\SSGRule{\Gamma_{\Condition{x < y}{\symb{true}}} &}{& 
	\{ x \mapsto \symb{0}, ~ y \mapsto \symb{s}(y_2) \}
	\mid
	\Gamma_{\Condition{x < y}{\symb{true}}} \ScompSym \{ x \mapsto \symb{s}(x), ~ y \mapsto \symb{s}(y) \}
	}
	\\[3pt]
	\SSGRule{\Gamma_{\Condition{y < x}{\symb{true}}} &}{& \Gamma_{\Condition{x < y}{\symb{true}}} \ScompSym \{ x\mapsto y, ~ y \mapsto x \}}
	\\
 \end{array}
\end{equation}

In our previous work~\cite{NM18fscd}, to show the emptiness of the set of substitutions generated from e.g., $\Gamma_{\Condition{x < y}{\symb{true}}} \Eand \Gamma_{\Condition{y < x}{\symb{true}}}$, we transform the grammar representation to an RTG that overapproximately generates the ranges of ground substitutions w.r.t.\ a single variable.
For example, for $x$, the production rules of (\ref{eqn:gcd-simple}) is transformed into the following ones:
\[
\begin{array}{@{}l@{\quad\quad}l@{}}
		\SSGRule{\Gamma_{\Condition{x < y}{\symb{true}}}^x}{
		\symb{0} 
		\mid
		\symb{s}(\Gamma_{\Condition{x < y}{\symb{true}}}^x)
		}
		&
		\SSGRule{\Gamma_{\Condition{y < x}{\symb{true}}}^x}{ \Gamma_{\Condition{x < y}{\symb{true}}}^y
		}
\\[3pt]
		\SSGRule{\Gamma_{\Condition{x < y}{\symb{true}}}^y}{
		\symb{s}(A) 
		\mid
		\symb{s}(\Gamma_{\Condition{x < y}{\symb{true}}}^y)
		}
		&
		\SSGRule{A}{\symb{0} \mid \symb{s}(A) \mid \symb{true} \mid \symb{false}
		}
\end{array}
\]
Note that non-terminal $A$ generates arbitrary ground constructor terms.
Since we focus on $x$ only, non-terminals $\Gamma_{\Condition{x < y}{\symb{true}}}^x$ and $\Gamma_{\Condition{y < x}{\symb{true}}}^x$ generate $\{ \symb{s}^n(a) \mid n \geq 0, ~ a \in \{\symb{0},\symb{true},\symb{false}\} \}$ and $\{ \symb{s}^n(a) \mid n > 0, ~ a \in \{\symb{0},\symb{true},\symb{false}\} \}$, respectively, and we cannot prove that there is no substitution generated from $\Gamma_{\Condition{x < y}{\symb{true}}} \Eand \Gamma_{\Condition{y < x}{\symb{true}}}$.

In this paper, we aim at showing that there is no substitution generated by (\ref{eqn:gcd-simple}) from the initial non-terminal $\Gamma_{\Condition{x < y}{\symb{true}} \Eand \Condition{y < x}{\symb{true}}}$, i.e., 
showing that $\Lang(\cGgcd,\Gamma_{\Condition{x < y}{\symb{true}}}) \cap \Lang(\cGgcd,\Gamma_{\Condition{y < x}{\symb{true}}}) = \emptyset$.
To this end, under a certain syntactic condition, we show a transformation of the grammar representation of a narrowing tree into an RTG that overapproximately generates the ranges of ground substitutions generated by the grammar representation (Section~\ref{sec:transformation}).
More precisely, using the idea of \emph{coding} for tuples of ground terms~\cite[Section~3.2.1]{TATA} (see Figure~\ref{fig:ideaOfCoding}), we extend a transformation in~\cite{NM18fscd} w.r.t.\ a single variable to two variables.
It is straightforward to further extend the transformation to three or more variables.
We do not explain how to, given a constructor 3-SDCTRS, construct (the grammar representation of) a narrowing tree, and concentrate on how to transform a  grammar representation into an RTG that generates the ranges of ground substitutions generated by the grammar representation.

\begin{figure}[t]
\vspace{-5pt}
\[
\Coding{%
\raisebox{7mm}{%
\xymatrix@R=6pt@C=3pt{
 & \symb{f} \ar@{-}[ld]\ar@{-}[rd] \\
 \symb{g} \ar@{-}[d] & & \symb{g} \ar@{-}[d] \\
 \symb{a} & & \symb{a} \\
}}
~~}{%
\raisebox{7mm}{%
\xymatrix@R=6pt@C=3pt{
& & \symb{f} \ar@{-}[ld]\ar@{-}[rd] \\
& \symb{f} \ar@{-}[dl] \ar@{-}[dr] & & \symb{a} \\
\symb{a} & & \symb{a} \\
}}
}
\quad = \quad
\raisebox{8mm}{%
\xymatrix@R=6pt@C=3pt{
& & \Pairsymb{f}{f} \ar@{-}[ld]\ar@{-}[rd] \\
& \Pairsymb{g}{f} \ar@{-}[dl] \ar@{-}[dr] & & \Pairsymb{g}{a} \ar@{-}[d] \\
\Pairsymb{a}{a} & & \Pairsymb{\bot}{a} & \Pairsymb{a}{\bot} \\
}}
\]
\vspace{-10pt}
\caption{the coding of $\symb{f}(\symb{g}(\symb{a}),\symb{g}(\symb{a}))$ and $\symb{f}(\symb{f}(\symb{a},\symb{a}),\symb{a})$.}
\label{fig:ideaOfCoding}	
\end{figure}

\paragraph{Outline of Our Approach}
Using the rules of (\ref{eqn:gcd-simple}), we briefly illustrate the outline of the transformation.
Roughly speaking, we apply the coding for tuples of terms to the range of substitutions, e.g., $\symb{0}$ and $\symb{s}(y_2)$ for $\{ x \mapsto \symb{0}, ~ y \mapsto \symb{s}(y_2) \}$.
%
The rules for $\Gamma_{\Condition{x < y}{\symb{true}}}$ are transformed into
\[
\SSGRule{\Gamma_{\Condition{x < y}{\symb{true}}}^{(x,y)}}{ \Pairsymb{0}{s}(\NT{\bot}{A}) }
\qquad
\SSGRule{\Gamma_{\Condition{x < y}{\symb{true}}}^{(x,y)}}{ \Pairsymb{s}{s}(\Gamma_{\Condition{x < y}{\symb{true}}}^{(x,y)}) }.
\]
where the non-terminal $\NT{\bot}{A}$ generates ground terms obtained by applying the coding to $\bot$ and ground constructor terms.
The coding of $\symb{s}(x)$ and $\symb{s}(y)$ is $\Pairsymb{s}{s}(\NT{x}{y})$.
Variables $x,y$ are instantiated by substitutions generated from $\Gamma_{\Condition{x < y}{\symb{true}}}$, and hence we replaced $\NT{x}{y}$ by $\Gamma_{\Condition{x < y}{\symb{true}}}^{(x,y)}$.
The rule for $\Gamma_{\Condition{y < x}{\symb{true}}}$ is transformed into
\[
\SSGRule{\Gamma_{\Condition{y < x}{\symb{true}}}^{(x,y)}}{ \Gamma_{\Condition{x < y}{\symb{true}}}^{(y,x)} }.
\]
Since $x,y$ are swapped by $\{ x \mapsto y, ~ y \mapsto x \}$, we generate a new non-terminal $\Gamma_{\Condition{x < y}{\symb{true}}}^{(y,x)}$ and its rules as well as the above rules:
\[
\SSGRule{\Gamma_{\Condition{x < y}{\symb{true}}}^{(y,x)}}{ \Pairsymb{s}{0}(\NT{A}{\bot}) }
\qquad
\SSGRule{\Gamma_{\Condition{x < y}{\symb{true}}}^{(y,x)}}{ \Pairsymb{s}{s}(\Gamma_{\Condition{x < y}{\symb{true}}}^{(y,x)}) }.
\]
where the non-terminal $\NT{A}{\bot}$ generates ground terms obtained by applying the coding to ground constructor terms and $\bot$.
Every ground term generated from $\Gamma_{\Condition{x < y}{\symb{true}}}^{(x,y)}$ contains $\Pairsymb{0}{s}$, and every ground term generated from $\Gamma_{\Condition{y < x}{\symb{true}}}^{(x,y)}$ contains $\Pairsymb{s}{0}$.
Neither $\Pairsymb{0}{s}$ nor $\Pairsymb{s}{0}$ is shared by the languages of $\Gamma_{\Condition{x < y}{\symb{true}}}^{(x,y)}$ and $\Gamma_{\Condition{y < x}{\symb{true}}}^{(x,y)}$, and hence there is no substitution which corresponds to an expression generated from $\Gamma_{\Condition{x < y}{\symb{true}} \Eand \Condition{y < x}{\symb{true}}}$.
For this reason, we can transform $\Gamma_{\Condition{x < y}{\symb{true}} \Eand \Condition{y < x}{\symb{true}}}$ of~(\ref{eqn:gcd}) into
\[
\SSGRule{\Gamma_{\Condition{x < y}{\symb{true}} \Eand \Condition{y < x}{\symb{true}}}}{\EmptysetSym}
\]
which means that there exist no constructor substitution $\theta$ satisfying the condition $\Condition{x < y}{\symb{true}} \Eand \Condition{y < x}{\symb{true}}$ under the constructor-based rewriting.

\medskip
One may think that tuples of terms are enough for our goal.
However, substitutions are generated by standard compositions, and tuples makes us introduce composition of tuples.
For example, the range of $\sigma=\{x \mapsto \symb{f}(x',\symb{g}(\symb{a})), ~ y \mapsto \symb{f}(y',\symb{a}) \}$ is represented as a tuple
$\symb{tup_2}(\symb{f}(x',\symb{g}(\symb{a})), \symb{f}(y',\symb{a}))$,
where $\symb{tup_2}$ is a binary symbol for tuples of two terms.
To apply $\theta=\{ x' \mapsto \symb{g}(\symb{a}), ~ y' \mapsto \symb{f}(\symb{a},\symb{a}) \}$ to the tuple, we reconstruct a tuple from $\symb{tup_2}(\symb{f}(x',\symb{g}(\symb{a})), \symb{f}(y',\symb{a}))$ and $\theta$.
On the other hand, the coding of terms makes us avoid the reconstruction and use standard composition of substitutions to compute the range of composed substitution. For example, $\sigma$ and $\theta$ can be represented by $\{ {x}{y} \mapsto \Pairsymb{f}{f}({x'}{y'}, \Pairsymb{g}{a}(\Pairsymb{a}{\bot})) \}$ and $\{ {x'}{y'} \mapsto \Pairsymb{g}{f}(\Pairsymb{a}{a}, \Pairsymb{\bot}{a}) \}$, respectively, where both ${x}{y}$ and ${x'}{y'}$ are considered single variables.

Using the rules for $\Gamma_{\Condition{x < y}{\symb{true}}}$ of~(\ref{eqn:gcd-simple}), we further show that the weakness of the above approach of using tuples.
Let us try to transform the rules of $\Gamma_{\Condition{x < y}{\symb{true}}}$ into an RTG that generates $\{ \symb{tup_2}(\symb{s}^m(\symb{0}),\symb{s}^n(a)) \mid 0 \leq m < n, ~ a \in \{\symb{0},\symb{true},\symb{false}\} \}$.
The first rule $\SSGRule{\Gamma_{\Condition{x < y}{\symb{true}}}}{\{ x \mapsto \symb{0}, ~ y \mapsto \symb{s}(y_2) \}}$ is transformed into $\SSGRule{\Gamma_{\Condition{x < y}{\symb{true}}}^{(x,y)}}{ \symb{tup_2}(\symb{0},\symb{s}(A)) }$ with the rules of $A$ above.
The second rule $\SSGRule{\Gamma_{\Condition{x < y}{\symb{true}}}}{ \Gamma_{\Condition{x < y}{\symb{true}}} \ScompSym \{ x \mapsto \symb{s}(x), ~ y \mapsto \symb{s}(y) \} }$ is transformed into $\SSGRule{\Gamma_{\Condition{x < y}{\symb{true}}}^{(x,y)}}{ \symb{tup_2}(\symb{s}(\Gamma_{\Condition{x < y}{\symb{true}}}^x),\symb{s}(\Gamma_{\Condition{x < y}{\symb{true}}}^y)) }$ with the rules of $\Gamma_{\Condition{x < y}{\symb{true}}}^x$ and $\Gamma_{\Condition{x < y}{\symb{true}}}^y$ above.
These rules generates not only terms in $\{ \symb{tup_2}(\symb{s}^m(\symb{0}),\symb{s}^n(a)) \mid 0 \leq m < n, ~ a \in \{\symb{0},\symb{true},\symb{false}\} \}$ but also other terms, e.g., $\symb{tup_2}(\symb{s}(\symb{0}),\symb{s}(\symb{0}))$.
The term $\symb{tup_2}(\symb{s}(\symb{0}),\symb{s}(\symb{0}))$ should not be generated because the term can be a common element generated by $\Gamma_{\Condition{x < y}{\symb{true}}}^{(x,y)}$ and $\Gamma_{\Condition{y < x}{\symb{true}}}^{(x,y)}$ and we cannot prove $\Gamma_{\Condition{x < y}{\symb{true}}} \Eand \Gamma_{\Condition{y < x}{\symb{true}}}$ does not generate any substitution. 


\section{Preliminaries}
\label{sec:preliminaries}

In this section, we recall basic notions and notations of term
rewriting~\cite{BN98,Ohl02} and regular tree grammars~\cite{TATA}. 
Familiarity with basic notions on term rewriting~\cite{BN98,Ohl02} is assumed.

\subsection{Terms and Substitutions}

Throughout the paper, we use $\cV$ as a countably infinite set of
\emph{variables}. 
Let $\cF$ be a \emph{signature}, a finite set of \emph{function symbols} $\metasymb{f}$
each of which has its own fixed arity, denoted by $\Arity(\metasymb{f})$.
We often write $\metasymb{f}/n \in \cF$ instead of ``an $n$-ary symbol $\metasymb{f} \in \cF$'', and so on.
The set of \emph{terms} over $\cF$ and $V$ ($\subseteq \cV$) is denoted
by $\Terms(\cF,V)$, and $\Terms(\cF,\emptyset)$, the set of \emph{ground} terms, is abbreviated to $\Terms(\cF)$.
The set of variables appearing in any of terms
$t_1,\ldots,t_n$ is denoted by $\Var(t_1,\ldots,t_n)$.
We denote the set of positions of a term $t$ by $\Pos(t)$.
For a term $t$ and a position $p$ of $t$, the \emph{subterm} of $t$ at
$p$ is denoted by $t|_p$.
The function symbol at the \emph{root} position $\varepsilon$ of a term
$t$ is denoted by $\Root(t)$.
Given terms $s,t$ and a position $p$ of $s$, we denote by $s[t]_p$ the term obtained from $s$ by replacing the subterm $s|_p$ at $p$ by $t$.

A \emph{substitution} $\sigma$ is a mapping from variables to terms such that the number of variables $x$ with $\sigma(x)\ne x$ is finite, and is naturally extended over terms.
The \emph{domain} and \emph{range} of $\sigma$ are denoted by
$\Dom(\sigma)$ and $\Range(\sigma)$, respectively.
The set of variables in $\Range(\sigma)$ is denoted by $\VRange(\sigma)$: 
$\VRange(\sigma)=\bigcup_{x\in\Dom(\sigma)} \Var(\applySubst{\sigma}{x})$.
We may denote $\sigma$ by $\{ x_1 \mapsto t_1, ~ \ldots, ~ x_n \mapsto
t_n \}$ if $\Dom(\sigma) = \{x_1,\ldots,x_n\}$ and $\sigma(x_i) = t_i$
for all $1 \leq i \leq n$.
The \emph{identity} substitution is denoted by $\ID$.
The set of substitutions that range over a signature $\cF$ and a set $V$ of variables is denoted by $\Subst(\cF,V)$: 
$\Subst(\cF,V) = \{ \sigma \mid \mbox{$\sigma$ is a substitution}, ~ \Range(\sigma) \subseteq \Terms(\cF,V) \}$.
The application of a substitution $\sigma$ to a term $t$ is abbreviated to $\applySubst{\sigma}{t}$, and $\applySubst{\sigma}{t}$ is called
an \emph{instance} of $t$.
Given a set $V$ of variables, $\sigma|_V$ denotes the \emph{restricted}
substitution of $\sigma$ w.r.t.\ $V$:
$\sigma|_V = \{ x \mapsto \applySubst{\sigma}{x} \mid x \in \Dom(\sigma) \cap V\}$.
A substitution $\sigma$ is called a \emph{renaming} if $\sigma$ is a bijection on $\cV$.
The \emph{composition} $\theta\Scomp\sigma$ {(simply $\theta\sigma$)
of substitutions $\sigma$ and $\theta$ is defined as $\ApplySubst{(\theta\Scomp\sigma)}{x}=\theta(\sigma(x))$.
A substitution $\sigma$ is called \emph{idempotent} if $\sigma\sigma = \sigma$ (i.e., $\Dom(\sigma)\cap\VRange(\sigma)=\emptyset$).
A substitution $\sigma$ is called \emph{more general than} a substitution $\theta$, written by $\sigma \leq \theta$, if there exists a substitution $\delta$ such that 
$\delta\sigma = \theta$. 
A finite set $E$ of term equations $\Eqn{s}{t}$ is called \emph{unifiable} if there exists a \emph{unifier} of $E$ such that $\applySubst{\sigma}{s} = \applySubst{\sigma}{t}$ for all term equations $\Eqn{s}{t}$ in $E$.
A \emph{most general unifier} (mgu) of $E$ is denoted by $\mgu(E)$ if $E$ is unifiable.
Terms $s$ and $t$ are called \emph{unifiable} if $\{\Eqn{s}{t}\}$ is unifiable.
The application of a substitution $\theta$ to $E$, denoted by $\applySubst{\theta}{E}$, is defined as $\applySubst{\theta}{E}=\{ \Eqn{\applySubst{\theta}{s}}{\applySubst{\theta}{t}} \mid \Eqn{s}{t} \in E\}$.

\subsection{Conditional Rewriting}

An \emph{oriented conditional rewrite rule} over a signature $\cF$ is
a triple $(\ell,r,c)$, denoted by $\CRule{\ell}{r}{c}$, such that the
\emph{left-hand side} $\ell$ is a non-variable term in $\Terms(\cF,\cV)$, the
\emph{right-hand side} $r$ is a term in $\Terms(\cF,\cV)$, and the
\emph{conditional part} $c$ is a sequence $\Condition{s_1}{t_1},\ldots,\Condition{s_k}{t_k}$ of term pairs ($k \geq 0$) where $s_1,t_1,\ldots,s_k,t_k \in \Terms(\cF,\cV)$. 
In particular, a conditional rewrite rule is called \emph{unconditional}
if the conditional part is the empty sequence (i.e., $k = 0$), and we
may abbreviate it to $\Rule{\ell}{r}$. 
Variables in $\Var(r,c)\setminus \Var(\ell)$ are called \emph{extra variables} of the rule.
An \emph{oriented conditional term rewriting system} (a CTRS, for short) over $\cF$ is a set of oriented conditional rewrite rules over $\cF$.
A CTRS is called an (unconditional) \emph{term rewriting system} (a TRS, for short)
if every rule $\CRule{\ell}{r}{c}$ in the CTRS is unconditional and
satisfies $\Var(\ell) \supseteq \Var(r)$.  
The \emph{reduction relation} $\to_\cR$ of a CTRS $\cR$ is defined as
${\to_{\cR}} = {\bigcup_{n \geq 0} \to_{(n),\cR}}$, where
${\to_{(0),\cR}} = \emptyset$, and ${\to_{(i+1),\cR}} = \{
(s[\applySubst{\sigma}{\ell}]_p,s[\applySubst{\sigma}{r}]_p) \mid s \in \Terms(\cF,\cV), ~ \CRule{\ell}{r}{\Condition{s_1}{t_1},\ldots,\Condition{s_k}{t_k}} \in \cR, ~ 
\applySubst{\sigma}{s_1} \mathrel{\to^*_{(i),\cR}} \applySubst{\sigma}{t_1}, ~\ldots,~ \applySubst{\sigma}{s_k} \mathrel{\to^*_{(i),\cR}} \applySubst{\sigma}{t_k} \}$ for $i \geq 0$.  
To specify the position where the rule is applied, we may write $\to_{p,\cR}$ instead of $\to_\cR$.
The \emph{underlying unconditional system} $\{ \Rule{\ell}{r} \mid \CRule{\ell}{r}{c} \in \cR \}$ of $\cR$ is denoted by $\cR_u$.
A term $t$ is called a \emph{normal form} (of $\cR$) if $t$ is
irreducible w.r.t.\ $\cR$. 
A substitution $\sigma$ is called \emph{normalized}
(w.r.t.\ $\cR$) if $\applySubst{\sigma}{x}$ is a normal form of $\cR$ for each variable $x \in \Dom(\sigma)$. 
A CTRS $\cR$ is called 
	\emph{Type 3} (3-CTRS, for short) if every rule $\CRule{\ell}{r}{c} \in \cR$ satisfies that $\Var(r) \subseteq \Var(\ell,c)$.
$\Var(s_i) \subseteq \Var(\ell,t_1,\ldots,t_{i-1})$ for all $1 \leq i \leq k$.

The sets of \emph{defined symbols} and \emph{constructors} of a CTRS $\cR$ over a signature $\cF$ are
denoted by $\cD_\cR$ and $\cC_\cR$, respectively: 
$\cD_\cR = \{ \Root(\ell) \mid \CRule{\ell}{r}{c} \in \cR \}$ and
$\cC_\cR = \cF \setminus \cD_\cR$. 
Terms in $\Terms(\cC_\cR,\cV)$ are called \emph{constructor terms of $\cR$}.
A substitution in $\Subst(\cC_\cR,\cV)$ is called a \emph{constructor substitution of $\cR$}.
A term of the form $\metasymb{f}(t_1,\ldots,t_n)$ with $\metasymb{f}/n \in \cD_\cR$ and $t_1,\ldots,t_n \in \Terms(\cC_\cR,\cV)$ is called \emph{basic}.
A CTRS $\cR$ is called a \emph{constructor system} if for every rule $\CRule{\ell}{r}{c}$ in $\cR$, $\ell$ is basic.
A 3-DCTRS $\cR$ is called \emph{syntactically deterministic} (an SDCTRS, for short) if for
every rule $\CRule{\ell}{r}{\Condition{s_1}{t_1},\ldots,\Condition{s_k}{t_k}}\in \cR$, every $t_i$ is a constructor term or a ground normal
form of $\cR_u$.

A CTRS $\cR$ is called \emph{operationally terminating} if there are no
infinite well-formed trees in a certain logical inference
system~\cite{LMM05}---operational termination means that the evaluation of conditions must either successfully terminate or fail in finite time.
Two terms $s$ and $t$ are said to be \emph{joinable}, written as $s \downarrow_\cR t$, if there exists a term $u$ such that $s \mathrel{\to^*_\cR} u \mathrel{\gets^*_\cR} t$.
A CTRS $\cR$ is called \emph{confluent} if $t_1\downarrow_\cR t_2$ for any terms $t_1,t_2$ such that $t_1 \mathrel{\gets^*_\cR} \cdot \mathrel{\to^*_\cR} t_2$.

\subsection{Innermost Conditional Narrowing}

We denote a pair of terms $s,t$ by $\Condition{s}{t}$ (not an equation $s \approx t$) because we analyze conditions of rewrite rules and distinguish the left- and right-hand sides of $\Condition{s}{t}$.
In addition, we deal with pairs of terms as terms by considering $\Condition{}{}$ a binary function symbol.
For this reason, we apply many notions for terms to pairs of terms without notice.
For readability, when we deal with $\Condition{s}{t}$ as a term, we often bracket it such as $(\Condition{s}{t})$.
As in~\cite{MH94}, any CTRS in this paper is assumed to implicitly include the rule $\Rule{\BraCondition{x}{x}}{\symb{\top}}$ where $\top$ is a special constant.
The rule $\Rule{\BraCondition{x}{x}}{\symb{\top}}$ is used to test structural equivalence between two terms $t_1,t_2$ by means of $\Condition{t_1}{t_2}$.

To deal with a conjunction of pairs $e_1, \ldots, e_k$ of terms ($e_i$ is either $\Condition{s_i}{t_i}$ or $\top$) as a term, we write $e_1 \Eand \Eqns \Eand e_k$ by using an associative binary symbol $\Eand$.
We call such a term an \emph{equational term}.
Unlike~\cite{NV13}, to avoid $\Eand$ to be a defined symbol, we do not use 
any rule for $\Eand$, e.g., $\Rule{(\top \Eand x)}{x}$.
Instead of derivations ending with $\top$, we consider derivations that end with terms in $\Tend$.
We assume that none of $\Eand$, $\Condition{}{}$, or $\top$ is included in the range of any substitution below.

In the following, for a constructor 3-SDCTRS $\cR$, a pair $\Condition{s}{t}$ of terms is called a \emph{goal} of $\cR$ if the left-hand side $s$ is either a constructor term or a basic term and the right-hand side $t$ is a constructor term.
An equational term is called a \emph{goal clause} of $\cR$ if it is a conjunction of goals for $\cR$.
Note that for a goal clause $T$, any instance $\applySubst{\theta}{T}$ with $\theta$ a constructor substitution is a goal clause.
\begin{example}
The equational term $\Condition{x < y}{\symb{true}} \Eand \Condition{y < x}{\symb{true}}$ is a goal clause of $\cRgcd$.
\end{example}

The \emph{narrowing} relation \cite{Sla74,Hul80}
mainly extends rewriting by replacing matching with unification.
This paper follows the formalization in~\cite{NV12lopstr}, while we use the rule $\Rule{\BraCondition{x}{x}}{\top}$ instead of the corresponding inference rule.
Let $\cR$ be a CTRS.
A goal clause 
$S=U \Eand \Condition{s}{t} \Eand S'$ with $U \in \Tend$ is said to  \emph{conditionally narrow} into an equational term $T$ \emph{at an innermost position}, written as $S \mathrel{\ileadsto_\cR} T$, if there exist a non-variable position $p$ of $\BraCondition{s}{t}$, a variant $\CRule{\ell}{r}{C}$ of a rule in $\cR$, and a constructor substitution $\sigma$ such that $\Var(\ell,r,C) \cap \Var(S)=\emptyset$, $(\Condition{s}{t})|_p$ is basic, $(\Condition{s}{t})|_p$ and $\ell$ are unifiable, $\sigma = \mgu(\{\Eqn{(\Condition{s}{t})|_p}{\ell}\})$, and 
$T= U \Eand \applySubst{\sigma}{C} \Eand \ApplySubst{\sigma}{(\Condition{s}{t})[r]_p} \Eand \applySubst{\sigma}{S'}$.
Note that all extra variables of $\CRule{\ell}{r}{C}$ remain in $T$ as \emph{fresh} variables which do not appear in $S$.
We assume that $\Var(S) \cap \VRange(\sigma|_{\Var((\Condition{s}{t})|_p)}) = \emptyset$ (i.e., $\sigma|_{\Var((\Condition{s}{t})|_p)}$ is idempotent) and 
$\Var((\Condition{s}{t})|_p) \subseteq \Dom(\sigma)$. 
We write $S \mathrel{\ileadsto_{\sigma|_{\Var(S)},\cR}} T$ to make the substitution explicit. 
%
An \emph{innermost narrowing derivation} $T_0 \mathrel{\ileadsto_{\sigma,\cR}^*} T_n$ (and $T_0 \mathrel{\ileadsto_{\sigma,\cR}^n} T_n$) denotes a
sequence of narrowing steps $T_0 \mathrel{\ileadsto_{\sigma_1,\cR}} \cdots \mathrel{\ileadsto_{\sigma_n,\cR}} T_n$ with $\sigma = (\sigma_n \cdots \sigma_1)|_{\Var(T_0)}$ 
an idempotent substitution.
When we consider two (or more) narrowing derivations $S_1 \mathrel{\ileadsto^*_{\sigma_1,\cR}} T_1$ and $S_2 \mathrel{\ileadsto^*_{\sigma_2,\cR}} T_2$, we assume that $\VRange(\sigma_1) \cap \VRange(\sigma_2) = \emptyset$.

Innermost narrowing is a counterpart of \emph{constructor-based rewriting} (cf.~\cite{NV12lopstr}).
Following~\cite{NV12lopstr}, we define \emph{constructor-based conditional rewriting} on goal clauses as follows: 
for a goal clause 
$S= U \Eand \Condition{s}{t} \Eand S'$ with $U \in \Tend$,
we write $S \mathrel{\cto_\cR} T$ if there exist a non-variable position $p$ of $\BraCondition{s}{t}$, a rule $\CRule{\ell}{r}{C}$ in $\cR$, and a constructor substitution $\sigma$ such that $(\Condition{s}{t})|_p$ is basic, $(\Condition{s}{t})|_p=\applySubst{\sigma}{\ell}$, and 
$T= U \Eand \applySubst{\sigma}{C} \Eand (\Condition{s}{t})[\applySubst{\sigma}{r}]_p \Eand S'$.
\begin{theorem}[\cite{NM18fscd}]
\label{thm:c-narrowing-soundness}
Let $\cR$ be a constructor SDCTRS, $T$ a goal clause, and $U \in \Tend$.
\begin{itemize}
	\item If\/ $T \mathrel{\ileadsto^*_{\sigma,\cR}} U$, then $\applySubst{\sigma}{T} \mathrel{\cto^*_\cR} U$ (i.e., 
		$\applySubst{\sigma}{s} \mathrel{\cto^*_\cR} \applySubst{\sigma}{t}$ for all goals $\Condition{s}{t}$ in $T$).
	\item For a constructor substitution $\theta$, if\/ $\applySubst{\theta}{T} \mathrel{\cto^*_\cR} U$, then there exists an idempotent constructor substitution $\sigma$ such that $T \mathrel{\ileadsto^*_{\sigma,\cR}} U$ and $\sigma \leq \theta$.
\end{itemize}
\end{theorem}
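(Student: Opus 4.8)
The plan is to prove the two statements separately, each by induction on the length of the derivation, relating innermost conditional narrowing to constructor-based conditional rewriting by the usual "lifting/instantiation" correspondence adapted to goal clauses.

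For the first (soundness) direction, I would proceed by induction on the number $n$ of narrowing steps in $T \mathrel{\ileadsto^*_{\sigma,\cR}} U$. The base case $n=0$ forces $T = U \in \Tend$ and $\sigma$ to be (a restriction of) the identity, so $\applySubst{\sigma}{T} = U$ and there is nothing to rewrite. For the inductive step, write $T \mathrel{\ileadsto_{\sigma_1,\cR}} T_1 \mathrel{\ileadsto^*_{\sigma',\cR}} U$ where the first step uses position $p$, variant $\CRule{\ell}{r}{C}$, and $\sigma_1 = \mgu(\{\Eqn{(\Condition{s}{t})|_p}{\ell}\})$, and $T_1 = U' \Eand \applySubst{\sigma_1}{C} \Eand \applySubst{\sigma_1}{(\Condition{s}{t})[r]_p} \Eand \applySubst{\sigma_1}{S'}$ in the notation of the definition. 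By the induction hypothesis, $\applySubst{\sigma'}{T_1} \mathrel{\cto^*_\cR} U$; I must show $\applySubst{\sigma}{T} \mathrel{\cto^*_\cR} \applySubst{\sigma'}{T_1}$ where $\sigma = (\sigma'\sigma_1)|_{\Var(T)}$. Since $(\Condition{s}{t})|_p = \applySubst{\sigma_1}{\ell}$ (as $\sigma_1$ is the mgu and, by the idempotence assumption, acts as identity outside the relevant variables), applying $\sigma'$ gives $\applySubst{\sigma'\sigma_1}{(\Condition{s}{t})|_p} = \applySubst{\sigma'\sigma_1}{\ell}$, so with the constructor substitution $\applySubst{\sigma'\sigma_1}$ restricted appropriately we get a $\cto$-step rewriting $\applySubst{\sigma}{T}$ at $p$ using $\CRule{\ell}{r}{C}$, whose result is exactly $\applySubst{\sigma'}{T_1}$ up to the bookkeeping of which variables lie in $\Dom$; the side conditions on idempotence and disjointness of variable ranges are precisely what makes these restrictions coherent. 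The implicit rule $\Rule{\BraCondition{x}{x}}{\top}$ is handled as just another unconditional rule, so no special case is needed. Concatenating the single $\cto$-step with the derivation from the IH finishes this direction, and the parenthetical about individual goals $\Condition{s}{t}$ follows since $\cto$ on a goal clause acts componentwise once $U \in \Tend$ absorbs the already-solved conjuncts.

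For the second (completeness) direction, I would again induct on the length of $\applySubst{\theta}{T} \mathrel{\cto^*_\cR} U$, with $\theta$ a constructor substitution. The base case is immediate: $\applySubst{\theta}{T} = U \in \Tend$, and since $U$ is ground and built from $\top,\Eand$ only while $T$ is a goal clause, $T$ itself must already be in $\Tend$, so take $\sigma = \ID$. For the step, consider the first $\cto$-step $\applySubst{\theta}{T} \mathrel{\cto_\cR} T_1' \mathrel{\cto^*_\cR} U$ at position $p$ with rule $\CRule{\ell}{r}{C}$ and constructor substitution $\tau$, so $(\applySubst{\theta}{(\Condition{s}{t})})|_p = \applySubst{\tau}{\ell}$. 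The key lifting observation is that $p$ is a non-variable position of the underlying $\BraCondition{s}{t}$ in $T$ (because $\applySubst{\theta}{(\Condition{s}{t})}|_p$ is basic, hence rooted by a defined symbol, whereas $\theta$ is a constructor substitution, so the subterm at $p$ cannot come from inside a $\theta$-image of a variable) and $(\Condition{s}{t})|_p$ and $\ell$ are unifiable, witnessed by (a suitable combination of) $\theta$ and $\tau$. Let $\sigma_1 = \mgu(\{\Eqn{(\Condition{s}{t})|_p}{\ell}\})$; then $\sigma_1 \leq \theta \uplus \tau$ on the relevant variables, so there is a constructor substitution $\delta$ with $\delta \sigma_1 = \theta \uplus \tau$ restricted appropriately. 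Performing the narrowing step gives $T \mathrel{\ileadsto_{\sigma_1,\cR}} T_1$ with $\applySubst{\delta}{T_1} = T_1'$ (again modulo the standard restriction bookkeeping, using that extra variables of the rule become fresh and are covered by $\tau$). Now $\applySubst{\delta}{T_1} = T_1' \mathrel{\cto^*_\cR} U$ is a shorter derivation from an instance of the goal clause $T_1$ by the constructor substitution $\delta$, so the induction hypothesis yields an idempotent constructor $\sigma'$ with $T_1 \mathrel{\ileadsto^*_{\sigma',\cR}} U$ and $\sigma' \leq \delta$. Composing, $T \mathrel{\ileadsto^*_{(\sigma'\sigma_1)|_{\Var(T)},\cR}} U$, and one checks $(\sigma'\sigma_1)|_{\Var(T)} \leq \theta$ and that it can be chosen idempotent (renaming away the fresh variables), which is the claim.

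The main obstacle is the second direction's lifting step: making precise that the first $\cto$-step necessarily occurs at a non-variable position of $T$'s pattern and that the mgu $\sigma_1$ is "more general" than the combined matcher $\theta \uplus \tau$, all while correctly tracking domains, ranges, freshness of extra variables, and the various restriction operators so that the substitutions actually compose the way the narrowing definition demands (in particular the standing assumptions $\Var(S)\cap\VRange(\sigma|_{\Var((\Condition{s}{t})|_p)})=\emptyset$ and $\Var((\Condition{s}{t})|_p)\subseteq\Dom(\sigma)$, and disjointness of variable ranges across derivations). The soundness direction is comparatively routine bookkeeping; I would write it first to fix notation and then mirror it for completeness.
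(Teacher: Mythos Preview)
The paper does not prove this theorem: it is stated with the citation \cite{NM18fscd} and no proof is given, so there is nothing in the present paper to compare your attempt against. Your proposal follows the standard soundness/completeness argument for innermost narrowing via a lifting lemma (induction on derivation length, using that a constructor substitution cannot introduce a defined symbol so the redex position is non-variable in the original term, then factoring the matcher through the mgu), which is indeed the classical approach one finds in the narrowing literature and presumably in \cite{NM18fscd} as well; the bookkeeping concerns you flag about idempotence, domain restrictions, and freshness are exactly the delicate points, but your outline handles them in the expected way.
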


\begin{example}
\label{ex:conditional-narrowing}
Consider $\cRgcd$ in Section~\ref{sec:intro} again.
The following is an instance of innermost conditional narrowing of $\cRgcd$:
\[
\begin{array}{@{}l@{~~~~}l@{}}
\lefteqn{%
\BraCondition{\symb{gcd}(\symb{s}^4(\symb{0}),y)}{z} \Eand \BraCondition{\symb{s}(\symb{0}) < z}{\symb{true}}
} 
\\
& \mathrel{\ileadsto_{\{y\mapsto \symb{s}(y_1)\},\cRgcd}}
\BraCondition{y_1 < \symb{s}^3(\symb{0})}{\symb{true}} \Eand \BraCondition{\symb{gcd}(\symb{s}^3(\symb{0}) - y_1,\symb{s}(y_1))}{z} \Eand \BraCondition{\symb{s}(\symb{0}) < z}{\symb{true}}
\\	
& \mathrel{\ileadsto^2_{\{y_1\mapsto \symb{s}(\symb{0})\},\cRgcd}}
\BraCondition{\symb{true}}{\symb{true}} \Eand \BraCondition{\symb{gcd}(\symb{s}^3(\symb{0}) - \symb{s}(\symb{0}),\symb{s}^2(\symb{0}))}{z} \Eand \BraCondition{\symb{s}(\symb{0}) < z}{\symb{true}}
\\	
& \mathrel{\ileadsto_{\ID,\cRgcd}}
\top \Eand \BraCondition{\symb{gcd}(\symb{s}^3(\symb{0}) - \symb{s}(\symb{0}),\symb{s}^2(\symb{0}))}{z} \Eand \BraCondition{\symb{s}(\symb{0}) < z}{\symb{true}}
\\	
& \mathrel{\ileadsto^2_{\ID,\cRgcd}}
\top \Eand \BraCondition{\symb{gcd}(\symb{s}^2(\symb{0}),\symb{s}^2(\symb{0}))}{z} \Eand \BraCondition{\symb{s}(\symb{0}) < z}{\symb{true}}
\\	
& \mathrel{\ileadsto_{\ID,\cRgcd}}
\top \Eand \BraCondition{\symb{s}^2(\symb{0})}{z} \Eand \BraCondition{\symb{s}(\symb{0}) < z}{\symb{true}}
\\	
& \mathrel{\ileadsto_{\{z\mapsto \symb{s}^2(\symb{0})\},\cRgcd}}
\top \Eand \top \Eand \BraCondition{\symb{s}(\symb{0}) < \symb{s}^2(\symb{0})}{\symb{true}}
\\	
& \mathrel{\ileadsto^2_{\ID,\cRgcd}}
\top \Eand \top \Eand \BraCondition{\symb{true}}{\symb{true}}
\\	
& \mathrel{\ileadsto_{\ID,\cRgcd}}
\top \Eand \top \Eand \top 
\\	
\end{array}
\]
The following constructor-based rewriting derivation corresponds to the above narrowing derivation:
\[
\begin{array}{@{}l@{~~~~}l@{}}
\lefteqn{%
\BraCondition{\symb{gcd}(\symb{s}^4(\symb{0}),\symb{s}^2(\symb{0}))}{\symb{s}^2(\symb{0})} \Eand \BraCondition{\symb{s}(\symb{0}) < \symb{s}^2(\symb{0})}{\symb{true}}
} 
\\
& \mathrel{\cto_{\cRgcd}}
\BraCondition{\symb{s}(\symb{0}) < \symb{s}^3(\symb{0})}{\symb{true}} \Eand \BraCondition{\symb{gcd}(\symb{s}^3(\symb{0}) - \symb{s}(\symb{0}),\symb{s}^2(\symb{0}))}{\symb{s}^2(\symb{0})} \Eand \BraCondition{\symb{s}(\symb{0}) < \symb{s}^2(\symb{0})}{\symb{true}}
\\	
& \mathrel{\cto_{\cRgcd}}
\BraCondition{\symb{true}}{\symb{true}} \Eand \BraCondition{\symb{gcd}(\symb{s}^3(\symb{0}) - \symb{s}(\symb{0}),\symb{s}^2(\symb{0}))}{\symb{s}^2(\symb{0})} \Eand \BraCondition{\symb{s}(\symb{0}) < \symb{s}^2(\symb{0})}{\symb{true}}
\\	
& \mathrel{\cto_{\cRgcd}}
\top \Eand \BraCondition{\symb{gcd}(\symb{s}^3(\symb{0}) - \symb{s}(\symb{0}),\symb{s}^2(\symb{0}))}{\symb{s}^2(\symb{0})} \Eand \BraCondition{\symb{s}(\symb{0}) < \symb{s}^2(\symb{0})}{\symb{true}}
\\	
& \mathrel{\cto^2_{\cRgcd}}
\top \Eand \BraCondition{\symb{gcd}(\symb{s}^2(\symb{0}),\symb{s}^2(\symb{0}))}{\symb{s}^2(\symb{0})} \Eand \BraCondition{\symb{s}(\symb{0}) < \symb{s}^2(\symb{0})}{\symb{true}}
\\	
& \mathrel{\cto_{\cRgcd}}
\top \Eand \BraCondition{\symb{s}^2(\symb{0})}{\symb{s}^2(\symb{0})} \Eand \BraCondition{\symb{s}(\symb{0}) < \symb{s}^2(\symb{0})}{\symb{true}}
\\	
& \mathrel{\cto_{\cRgcd}}
\top \Eand \top \Eand \BraCondition{\symb{s}(\symb{0}) < \symb{s}^2(\symb{0})}{\symb{true}}
\\	
& \mathrel{\cto^2_{\cRgcd}}
\top \Eand \top \Eand \BraCondition{\symb{true}}{\symb{true}}
\\	
& \mathrel{\cto_{\cRgcd}}
\top \Eand \top \Eand \top 
\\	
\end{array}
\]
\end{example}

\subsection{Regular Tree Grammars}

A \emph{regular tree grammar} (an RTG, for short) is a quadruple $\cG=(S,\cN,\cF,\cP)$ such that $\cF$ is a signature, $\cN$ is a finite set of \emph{non-terminals} (constants not in $\cF$), $S \in \cN$, and $\cP$ is a finite set of \emph{production rules} of the form $\Rule{A}{\beta}$ with $A \in \cN$ and $\beta \in \Terms(\cF\cup\cN)$.
Given a non-terminal $S' \in \cN$, the set $\{ t \in \Terms(\cF) \mid S' \mathrel{\to_\cP^*} t \}$ is the \emph{language generated by $\cG$ from $S'$}, denoted by $\Lang(\cG,S')$.
The \emph{initial} non-terminal $S$ is not so relevant in this paper.
A \emph{regular tree language} is a language generated by an RTG from one of its non-terminals.
The class of regular tree languages is equivalent to the class of \emph{recognizable tree languages} which are recognized by \emph{tree automata}.
This means that the \emph{intersection (non-)emptiness problem} for regular tree languages is decidable.

\begin{example}
The RTG $\cGex=(X,\{X,X'\},\{\symb{0}/0,\symb{s}/1\},\{\Rule{X}{\symb{0}}, ~ \Rule{X}{\symb{s}(X')}, ~ \Rule{X'}{\symb{s}(X)} \})$ generates the sets of even and odd numbers over $\symb{0}$ and $\symb{s}$ from $X$ and $X'$, respectively:
$\Lang(\cGex,X) =\Lang(\cGex) = \{ \symb{s}^{2n}(\symb{0}) \mid n \geq 0 \}$ and $\Lang(\cGex,X') = \{ \symb{s}^{2n+1}(\symb{0}) \mid n \geq 0 \}$.
\end{example}

\section{Coding of Tuples of Ground Terms}
\label{sec:coding}

In this section, we introduce the notion of \emph{coding} of tuples of ground terms~\cite[Section~3.2.1]{TATA}.
To simplify discussions, we consider pairs of terms.

Let $\cF$ be a signature. 
We prepare the signature $\cF' = (\cF\cup\{\bot\})^2$, where $\bot$ is a new constant.
For symbols $\symb{f_1},\symb{f_2}\in \cF$, we denote the function symbol $(\symb{f_1},\symb{f_2}) \in \cF'$ by $\Pairsymb{f_1}{f_2}$, and the arity of  $\Pairsymb{f_1}{f_2}$ is $\max(\Arity(\symb{f_1}),\Arity(\symb{f_2}))$.
The coding of pairs of ground terms, $\Coding{\cdot}{\cdot}$, is recursively defined as follows:
\begin{itemize}
	\item 
$
\Coding{\symb{f}(s_1,\ldots,s_m)}{\symb{g}(t_1,\ldots,t_n)} =
\Pairsymb{f}{g}(\Coding{s_1}{t_1},\ldots,\Coding{s_m}{t_m},\Coding{\bot}{t_{m+1}},\ldots,\Coding{\bot}{t_n})
$
if $m \leq n$, 
\item 
$
\Coding{\symb{f}(s_1,\ldots,s_m)}{\symb{g}(t_1,\ldots,t_n)} =
\Pairsymb{f}{g}(\Coding{s_1}{t_1},\ldots,\Coding{s_n}{t_n},\Coding{s_{n+1}}{\bot},\ldots,\Coding{s_m}{\bot})
$ if $m>n$,
	\item 
$
\Coding{\symb{f}(s_1,\ldots,s_m)}{\bot} =
\Pairsymb{f}{\bot}(\Coding{s_1}{\bot},\ldots,\Coding{s_m}{\bot})
$,
and
\item 
$
\Coding{\bot}{\symb{g}(t_1,\ldots,t_n)} =
\Pairsymb{\bot}{g}(\Coding{\bot}{t_1},\ldots,\Coding{\bot}{t_n})
$.
\end{itemize}
Note that $\Pos(\Coding{t_1}{t_2}) = \Pos(t_1) \cup \Pos(t_2)$.
Note also that for $i=1,2$ and for $p\in \Pos(\Coding{t_1}{t_2})$, if $p \notin \Pos(t_i)$, then $\bot$ is complemented for $t_i$.
As described in~\cite[Section~3.2.1]{TATA}, the basic idea of coding is to stack function symbols as illustrated in Figure~\ref{fig:ideaOfCoding}.

\begin{example}
As in Figure~\ref{fig:ideaOfCoding}, 
$
\Coding{\symb{f}(\symb{g}(\symb{a}),\symb{g}(\symb{a}))}{\symb{f}(\symb{f}(\symb{a},\symb{a}),\symb{a})}
=
\Pairsymb{f}{f}(\Pairsymb{g}{f}(\Pairsymb{a}{a}, \Pairsymb{\bot}{a}), \Pairsymb{g}{a}(\Pairsymb{a}{\bot}))
$.	
\end{example}

\section{Grammar Representations for Sets of Idempotent Substitutions}
\label{sec:grammar_representation}

In this section, we briefly introduce grammar representations that define sets of idempotent substitutions.
We follow the formalization in~\cite{NM18fscd}, which is based on \emph{success set equations} in~\cite{NV13}.
Since substitutions derived by narrowing steps are assumed to be idempotent, we deal with only idempotent substitutions which introduce only \emph{fresh} variables not appearing in any previous term.


In the following, a renaming $\xi$ is used to (partially) rename a particular term $t$ w.r.t.\ a set $X$ of variables with $X \subseteq \Var(t)\cap\Dom(\xi)$ by assuming that $\xi|_X$ is injective on $X$ (i.e., for all variables $x,y \in X$, if $x\ne y$ then $\applySubst{\xi}{x} \ne \applySubst{\xi}{y}$) and $\VRange(\xi|_X) \cap (\Var(t)\setminus X) = \emptyset$.
For this reason, 
we write $\xi|_X$ instead of $\xi$, and call $\xi|_X$ a \emph{renaming for $t$} (simply, a renaming).

We first introduce terms to represent idempotent substitutions computed using composition operators $\Scomp$ and $\Pcomp$.
We prepare the signature $\Sigma$ consisting of the following symbols~\cite{NM18fscd}:
\begin{itemize}
	\item 
	a finite number of idempotent substitutions which are considered constants,
	\hfill (basic elements)
	\item 
	a constant $\EmptysetSym$,
	\hfill (the empty set/non-existence)
	\item 
	an associative binary symbol ${\ScompSym}$,
	\hfill(standard composition)
	\item 
	an associative binary symbol ${\Eand}$, and
	\hfill (parallel composition)
   \item 
	a binary symbol $\CALL$.
	\hfill  (recursion with renaming)
\end{itemize}
We use infix notation for $\ScompSym$ and $\Eand[V]$, and may omit brackets with the precedence such that $\ScompSym$ has a higher priority than $\Eand$.

We deal with terms over $\Sigma$ and some constants used for non-terminals of grammar representations, where we allow such constants to only appear in the first argument of $\CALL$.
Note that a term without any constant may appear in the first argument of $\CALL$.
Given a finite set $\cN$ of constants ($\Sigma \cap \cN = \emptyset$), we denote the set of such terms by $\Terms(\Sigma\cup\cN)$.
We assume that each constant in $\cN$ has a term $t$ (possibly a goal clause) as subscript such as $\Gamma_t$.
For an expression $\CALL(\Gamma_t,\delta)$, the role of $\Gamma_t$ is to generate substitutions (more precisely, terms in $\Terms(\Sigma)$) from $\Gamma_t$, e.g., recursively, and the role of $\delta$ is to connect such substitutions with other substitutions if necessary, where the application of $\delta$ to some term results in $t$.
For this reason, we restrict the second argument of $\CALL$ to renamings, and for each term $\CALL(\Gamma_t,\delta)$, we require $\delta$ to be an idempotent renaming (i.e., $\Dom(\delta)\cap\VRange(\delta)=\emptyset$ and $\delta$ is injective on $\Dom(\delta)$) such that $\VRange(\delta) \subseteq \Var(t)$, and $\Dom(\delta) \cap (\Var(t)\setminus \VRange(\delta)) = \emptyset$.
\begin{example}[\cite{NM18fscd}]
\label{ex:expressions}
The following are terms in $\Terms(\Sigma)$: 
\begin{itemize}
	\item 
	$\{ y \mapsto \symb{0} \} \ScompSym \{x \mapsto \symb{s}(y)\}$,
	\item 
	$(\{x'\mapsto \symb{s}(y) \} \ScompSym \{x\mapsto x'\}) \Eand[\{x\}] \{x\mapsto \symb{s}(\symb{s}(z))\}$,
	\item 
	$(
			\EmptysetSym
			\Eand[\{y\}]
			\{ y \mapsto z \}
			)
			\ScompSym 
			\{x \mapsto \symb{s}(y)\}$, and
	\item 
	$\CALL(\{x\mapsto \symb{0}, ~ y \mapsto \symb{s}(y')\},\{x'\mapsto x, ~ y'\mapsto y\}) \ScompSym \{ y\mapsto\symb{s}(x')\}$.
\end{itemize}
Note that substitutions $\{ y \mapsto \symb{0} \}$, $\{x \mapsto \symb{s}(y)\}$, $\{x'\mapsto \symb{s}(y) \}$, $\{x\mapsto x'\}$, $\{x\mapsto \symb{s}(\symb{s}(z))\}$, $\{ y \mapsto z \}$, $\{x\mapsto \symb{0}, ~ y \mapsto \symb{s}(y')\}$, $\{x'\mapsto x, ~ y'\mapsto y\}$, $\{ y\mapsto\symb{s}(x')\}$ are considered constants.
\end{example}

Next, we recall \emph{parallel composition} $\Pcomp$ of idempotent substitutions~\cite{HR89,Pal90}, which is one of the most important key operations to enable us to construct \emph{finite} narrowing trees.
Given a substitution $\theta=\{x_1\mapsto t_1, ~ \ldots, ~ x_n \mapsto t_n\}$, we denote the set of term equations $\{\Eqn{x_1}{t_1},\, \ldots,\,  \Eqn{x_n}{t_n}\}$ by $\widehat{\theta}$.
\begin{definition}[parallel composition $\Pcomp$~\cite{Pal90}]
\label{def:pcomp}
Let $\theta_1$ and $\theta_2$ be idempotent substitutions.
Then, we define $\Pcomp$ as follows:
$\theta_1 \Pcomp \theta_2 = \mgu( \widehat{\theta_1} \cup \widehat{\theta_2} )$ if $\widehat{\theta_1} \Eand{} \widehat{\theta_2}$ is unifiable, and otherwise, $\theta_1 \Pcomp \theta_2 = \Fail$.
Note that we define $\theta_1 \Pcomp \theta_2 = \Fail$ if $\theta_1$ or $\theta_2$ is not idempotent.
Parallel composition is extended to sets $\Theta_1, \Theta_2$ of idempotent substitutions in the natural way:
$
	\Theta_1 \Pcomp \Theta_2 = \{ \theta_1 \Pcomp \theta_2 \mid \theta_1 \in \Theta_1, ~ \theta_2 \in \Theta_2, ~ \theta_1 \Pcomp \theta_2 \ne \Fail \}
$.
\end{definition}
We often have two or more substitutions that can be results of $\theta_1 \Pcomp \theta_2$ ($\ne \Fail$), while most general unifiers are unique up to variable renaming.
To simplify the semantics of grammar representations for substitutions, as a result of $\theta_1 \Pcomp \theta_2$ ($\ne \Fail$), we adopt an idempotent substitution $\sigma$ such that $\Dom(\theta_1)\cup\Dom(\theta_2) \subseteq \Dom(\sigma)$.
Note that most general unifiers we can adopt as results of $\theta_1 \Pcomp \theta_2$ under the convention are still not unique, while they are unique up to variable renaming.

\begin{example}[\cite{NM18fscd}]
	The parallel composition $\{x \mapsto \symb{s}(z), ~ y \mapsto z\} \Pcomp \{ x \mapsto w\}$ may return $\{x \mapsto \symb{s}(z), ~ y \mapsto z, ~ w \mapsto \symb{s}(z) \}$, but we do not allow $\{x \mapsto \symb{s}(y), ~ z \mapsto y, ~ w \mapsto \symb{s}(y) \}$ as a result because $y$ appears in the range.
	On the other hand, $\{x \mapsto \symb{s}(z), ~ y \mapsto z\} \Pcomp \{ x \mapsto y\} = \Fail$.
\end{example}

A key of construction of narrowing trees (and their grammar representations) is \emph{compositionality} of innermost narrowing (cf.~\cite{NM18fscd}):
$S_1 \Eand S_2 \mathrel{\ileadsto^*_{\sigma,\cR}} T$ if and only if $S_1 \mathrel{\ileadsto^*_{\sigma_1,\cR}} T_1$, $S_2 \mathrel{\ileadsto^*_{\sigma_2,\cR}} T_2$, $T=T_1 \Eand T_2$, and $\sigma = \sigma_1 \Pcomp \sigma_2$.
To compute a substitution derived by innermost narrowing from a goal clause $S_1 \Eand S_2$, we compute substitutions $\sigma_1$ and $\sigma_2$ derived by innermost narrowing from $S_1$ and $S_2$, respectively, and then compute $\sigma_1 \Pcomp \sigma_2$.
When we compute $\sigma_1 \Pcomp \sigma_2$
from two narrowing derivations $S_1 \mathrel{\ileadsto^*_{\sigma_1,\cR}} T_1$ and $S_2 \mathrel{\ileadsto^*_{\sigma_2,\cR}} T_2$, 
 we assume that $\VRange(\sigma_1) \cap \VRange(\sigma_2) = \emptyset$.
To satisfy this assumption explicitly in the semantics for $\Terms(\Sigma)$, we introduce an operation $\RenameFresh{\cdot}{\delta}$ of substitutions to make a substitution introduce only variables that do not appear in $\Dom(\delta)\cup\VRange(\delta)$:
for substitutions $\sigma,\delta$, we define $\RenameFresh{\sigma}{\delta}$ by $(\xi \Scomp \sigma)|_{\Dom(\sigma)}$ where $\xi$ is a renaming such that $\Dom(\xi) = \VRange(\sigma)$ and $\VRange(\xi)\cap (\Dom(\delta)\cup\VRange(\delta)\cup\Dom(\sigma))=\emptyset$.%
\footnote{
For $\VRange(\xi)$, we choose variables not appearing in any substitutions in $\Sigma$.
}
The subscript $\delta$ of $\RenameFresh{\cdot}{\delta}$ is used to specify freshness of variables---we say that a variable $x$ is \emph{fresh} w.r.t.\ a set $X$ of variables if $x \notin X$.

A term $e$ in $\Terms(\Sigma)$ defines a substitution.
The semantics of terms in $\Terms(\Sigma)$ is inductively defined as follows~\cite{NM18fscd}:
\begin{itemize}
	\item 
$\Semantics[Subst]{\theta}{X} = \theta$ if $\theta$ is a substitution,
	\item 
$\Semantics[Subst]{e_1 \ScompSym e_2}{X} = \Semantics[Subst]{e_1}{Y} \Scomp \Semantics[Subst]{e_2}{X}$ if $\Semantics[Subst]{e_2}{X}\ne \Fail$ and $\Semantics[Subst]{e_1}{Y}\ne \Fail$,
	\item 
$\Semantics[Subst]{e_1 \Eand[V] e_2}{X} = (\theta_1 \Pcomp \theta_2)|_{{\Dom(\theta_1)}\cup{\Dom(\theta_2)}}$ if $\Semantics[Subst]{e_1}{X} \ne \Fail$ and $\Semantics[Subst]{e_2}{Y}\ne \Fail$, where $\theta_1=\Semantics[Subst]{e_1}{X}$ and $\theta_2=\RenameFresh{\Semantics[Subst]{e_2}{Y}}{\theta_1}$,
	\item 
	$\Semantics[Subst]{\CALL(e,\delta)}{X} = (\RenameFresh{\Semantics[Subst]{e}{Y}}{\delta} \Scomp \delta)|_{\Dom(\delta)}$ 
		if $\Semantics[Subst]{e}{X}\ne \Fail$ and $\VRange(\delta) \subseteq \Dom(\Semantics[Subst]{e}{X})$, and
	\item 
otherwise, $\Semantics[S]{e}{X} = \Fail$ 	(e.g., $\Semantics[Subst]{\EmptysetSym}{X}=\Fail$).
\end{itemize}
Notice that $\Gamma_t$, a non-terminal used in an RTG, is not included in $\Terms(\Sigma)$, and thus, $\Semantics[Subst]{\Gamma_t}{X}$ is not defined.
Since $\Pcomp$ may fail, we allow to have $\Fail$, e.g.,
$\Semantics[Subst]{\{ y \mapsto \symb{s}(x)\} \ScompSym \{ x \mapsto y \} \Eand \{z \mapsto \symb{0}\}}{X} = \Fail$.
The number of variables appearing in an RTG defined below is finite.
However, we would like to use RTGs to define infinitely many substitutions such that the maximum number of variables we need cannot be fixed.
To solve this problem, in the definition of $\Semantics[Subst]{\CALL(e,\delta)}{X}$, we introduced the operation $\RenameFresh{\cdot}{\delta}$ that makes all variables introduced by $\Semantics[Subst]{e}{X\cup\Dom(\delta)\cup\VRange(\delta)}$ \emph{fresh w.r.t.\ $\Dom(\delta)\cup\VRange(\delta)$}.
In~\cite{NV13}, this operation is implicitly considered, but in~\cite{NM18fscd}, $\CALL$ is explicitly introduced to the syntax in order to convert terms in $\Terms(\Sigma)$ precisely.
To assume $\VRange(\Semantics[Subst]{e_1}{X}) \cap \VRange(\Semantics[Subst]{e_2}{X\cup\Dom(\theta_1)\cup\VRange(\theta_1)}) = \emptyset$ for $\Semantics[Subst]{e_1 \Eand[V] e_2}{X}$, we also introduced $\RenameFresh{\cdot}{\theta_1}$ in the case of $\Semantics[Subst]{e_1 \Eand[V] e_2}{X}$.

The semantics of terms in $\Terms(\Sigma)$ is naturally extended to subsets of $\Terms(\Sigma)$ as follows:
for a set $L \subseteq \Terms(\Sigma)$, 
$\Semantics{L}{} = \{ \Semantics[Subst]{e}{\Var(t)} \mid e \in L, ~ \Semantics[Subst]{e}{\Var(t)} \ne \Fail \}
$.

\begin{example}[\cite{NM18fscd}]
The expressions in Example~\ref{ex:expressions} are interpreted as follows:
\begin{itemize}
	\item 
	$\Semantics[Subst]{\{ y \mapsto \symb{0} \} \ScompSym \{x \mapsto \symb{s}(y)\}}{X} 
	= \{ y \mapsto \symb{0} \} \Scomp \{x \mapsto \symb{s}(y)\} 
	= \{x \mapsto \symb{s}(\symb{0}), ~ y \mapsto \symb{0} \}$,
	\item 
	$\Semantics[Subst]{(\{x'\mapsto \symb{s}(y) \} \ScompSym \{x\mapsto x'\}) \Eand[\{x\}] \{x\mapsto \symb{s}(\symb{s}(z))\}}{\{x\}}$\\
			$= \left(\{ x\mapsto \symb{s}(y), ~ x'\mapsto \symb{s}(y) \} \Pcomp \RenameFresh{\{x\mapsto \symb{s}(\symb{s}(z'))\}}{\{x\mapsto \symb{s}(y), ~ x'\mapsto \symb{s}(y)\}} \right)|_{\{x,x'\}}$\\
			$= \left(\{ x\mapsto \symb{s}(y), ~ x'\mapsto \symb{s}(y) \} \Pcomp \{x\mapsto \symb{s}(\symb{s}(z'))\} \right)|_{\{x,x'\}}$\\
			$= \left(\{ x\mapsto \symb{s}(\symb{s}(z')), ~ x'\mapsto \symb{s}(\symb{s}(z')) \} \right)|_{\{x,x'\}}$\,%
			\footnote{ Note that $\{ x\mapsto \symb{s}(y), ~ x'\mapsto \symb{s}(y) \} \Pcomp \{x\mapsto \symb{s}(\symb{s}(z'))\}=\{ x\mapsto \symb{s}(\symb{s}(z')), ~ x'\mapsto \symb{s}(\symb{s}(z')), ~ y \mapsto \symb{s}(z') \}$.}\\
			$= \{x\mapsto \symb{s}(\symb{s}(z')), ~ x'\mapsto \symb{s}(\symb{s}(z')) \}$,
	\item 
	$\Semantics[Subst]{
				(
				\EmptysetSym
				\Eand[\{y\}]
				\{ y \mapsto z \}
				)
				\ScompSym 
				\{x \mapsto \symb{s}(y)\} }{} = \Fail$ (since $\Semantics{\EmptysetSym}{}=\Fail$ and then $\Semantics[Subst]{\EmptysetSym
				\Eand[\{y\}]
				\{ y \mapsto z \}
				}{} = \Fail$), and
	\item 
	$\Semantics[Subst]{\CALL(\{x\mapsto \symb{0}, ~ y \mapsto \symb{s}(y')\},\{x'\mapsto x, ~ y'\mapsto y\}) \ScompSym \{ y\mapsto\symb{s}(x')\}}{\{y\}}$\\
			$= \left( \RenameFresh{\{x\mapsto \symb{0}, ~ y \mapsto \symb{s}(y')\}}{\{x'\mapsto x, ~ y'\mapsto y\}} \Scomp \{x'\mapsto x, ~ y'\mapsto y\}\right)|_{\{x',y'\}} \Scomp \{ y\mapsto\symb{s}(x')\}$\\
			$= \left( \{x\mapsto \symb{0}, ~ y \mapsto \symb{s}(y'')\} \Scomp \{x'\mapsto x, ~ y'\mapsto y\} \right)|_{\{x',y'\}} \Scomp \{ y\mapsto\symb{s}(x')\}$\\
			$= \{ x' \mapsto \symb{0}, ~ y' \mapsto \symb{s}(y''), ~ y \mapsto \symb{s(\symb{0})} \}$.
\end{itemize}
\end{example}

To define sets of idempotent substitutions, we adopt RTGs.
In the following, we drop the third component from grammars constructed below because the third one is fixed to $\Sigma$ with a finite number of substitutions that are clear from production rules.
A \emph{substitution-set grammar} (SSG) for a term $t_0$ is an RTG $\cG=(\Gamma_{t_0},\cN,\cP)$ such that $\cN$ is a finite set of non-terminals $\Gamma_t$, $\Gamma_{t_0} \in \cN$, and $\cP$ is a finite set of production rules of the form $\SSGRule{\Gamma_t}{\beta}$ with $\beta\in \Terms({\Sigma} \cup {\cN})$.
Note that $L(\cG,\Gamma_t) = \{ e \in \Terms(\Sigma) \mid \Gamma_t \mathrel{\to_{\cG}^*} e \}$ for each $\Gamma_t \in \cN$, and the numbers of variables appearing in $L(\cG,\Gamma_t)$ is finite.
The set of substitutions generated by $\cG$ from $\Gamma_t \in \cN$ is $\SubstSet{\cG}{\Gamma_t}$, i.e., 
$
\SubstSet{\cG}{\Gamma_t} = \{ \Semantics[Subst]{e}{\Var(t)} \mid e \in \Lang(\cG,\Gamma_t), ~ \Semantics[Subst]{e}{\Var(t)} \ne \Fail \}
$.
Note that the number of variables in $\bigcup_{\theta \in \SubstSet{\cG}{\Gamma_t}} \VRange(\theta)$ may be infinite because of the interpretation for $\CALL$.

\begin{example}
The RTG $\cGgcd$ in Section~\ref{sec:intro} is an SSG for a term $\Gamma_{\Condition{x < y}{\symb{true}} \Eand \Condition{y < x}{\symb{true}}}$.
We have that
\[
\begin{array}{@{}l@{}}
L(\cGgcd,\Gamma_{\Condition{x < y}{\symb{true}}}) = \\
~
\left\{
\begin{array}{@{\,}l@{\,}}
 \{x \mapsto \symb{0}, ~ y \mapsto \symb{s}(y_2) \},\\[3pt]
 \CALL(\{x \mapsto \symb{0}, ~ y \mapsto \symb{s}(y_2) \},\{x_3\mapsto x, ~ y_3 \mapsto y\}) \ScompSym \{x\mapsto \symb{s}(x_3), ~ y\mapsto \symb{s}(y_3)\}, 
 \\[3pt]
 \CALL\left(
   \begin{array}{@{}c@{}}
   \CALL(\{x \mapsto \symb{0}, ~ y \mapsto \symb{s}(y_2) \},\{x_3\mapsto x, ~ y_3 \mapsto y\}) \\
     \ScompSym \\
   \{x\mapsto \symb{s}(x_3), ~ y\mapsto \symb{s}(y_3)\} \\
   \end{array}, 
 \{x_3\mapsto x, ~ y_3 \mapsto y\}  
 \right) \ScompSym \{x\mapsto \symb{s}(x_3), ~ y\mapsto \symb{s}(y_3)\},
 \\ ~~ \ldots
 \end{array}
 \right\}
 \\
 \end{array}
\] 
and 
$\SubstSet{\cGgcd}{\Gamma_{\Condition{x < y}{\symb{true}}}} = \{ \{ x \mapsto \symb{s}^m(\symb{0}), ~ y \mapsto \symb{s}^n(a) \} \mid 0 \leq m < n, ~ a \in \{\symb{0},\symb{true},\symb{false} \} \}$.
\end{example}


\section{Transforming SSGs into RTGs Generating Ranges of Substitutions}
\label{sec:transformation}

In this section, given a goal clause $T$ and two variables $x_1,x_2$ appearing in $T$, we show a transformation of an SSG $\cG=(\Gamma_{T_0},\cN,\cP)$ into an RTG $\cG'$ such that $\Lang(\cG',\Gamma_T^{(x_1,x_2)}) \supseteq \{ \Coding{\applySubst{\xi\theta}{x_1}}{\applySubst{\xi\theta}{x_2}} \mid \theta \in \Lang(\cG,\Gamma_T), ~ \xi \in \Subst(\cC), ~ \Var(\applySubst{\theta}{x_1},\applySubst{\theta}{x_2}) \subseteq \Dom(\xi) \}$, where $\cC$ is a set of constructors we deal with.
Note that $T$ does not have to be $T_0$.
The transformation is an extension of the transformation in~\cite[Section~7]{NM18fscd} and applicable to SSGs satisfying a certain syntactic condition shown later.
In the following, 
we aim at showing that $\Lang(\cGgcd,\Gamma_{\Condition{x < y}{\symb{true}}})\cap \Lang(\cGgcd,\Gamma_{\Condition{y < x}{\symb{true}}})=\emptyset$.
We use $\cC$ as a set of constructors unless noted otherwise.

Let $\cG$ be an SSG $(\Gamma_{T_0},\cN,\cP)$ and $T$ a goal clause such that $\Gamma_T \in \cN$.
We denote by $\cP|_{\Gamma_T}$ the set of production rules that are reachable from $\Gamma_T$.
We assume that any rule in $\cP|_{\Gamma_T}$ is of the following form:
\[
	\Rule{\Gamma_{T'}}{\theta_1 \mid \cdots \mid \theta_m \mid \CALL(\Gamma_{T_1},\delta_1) \ScompSym \theta_{m+1} \mid \cdots \mid \CALL(\Gamma_{T_n},\delta_n) \ScompSym \theta_{m+n}}
\]
where $\VRange(\delta_j) = \Var(T_j)$%
\footnote{\label{fnt:response}
 In defining SSGs, we only required that $\VRange(\delta_j) \subseteq \Var(T_j)$, but to make the transformation below precise, we require that $\VRange(\delta_j) = \Var(T_j)$.
This requirement is not restrictive because SSGs for narrowing trees satisfy this requirement because $\delta_j$ connects $T_j$ with a renamed variant which has no shared variable with $T_j$.}
for all $1 \leq j \leq n$, and  $\theta_1,\ldots,\theta_{m+n}$ are idempotent substitutions such that $\Dom(\theta_j) = \Var(T')$ for all $1 \leq j \leq m+n$.
Note that $\Rule{\Gamma_{T'}}{\CALL(\Gamma_{T''},\delta)}$ is considered $\Rule{\Gamma_{T'}}{\CALL(\Gamma_{T''},\delta) \ScompSym \ID}$.
In addition, for each $\Rule{\Gamma_{T'}}{\CALL(\Gamma_{T_i},\delta_i) \ScompSym \theta_{m+i}}$ with $1 \leq i \leq n$, 
we assume that 
for all variables $x,y$ in $T'$ and for each position $p \in 
\Pos(\applySubst{\delta\theta_{m+i}}{x})\cap\Pos(\applySubst{\delta\theta_{m+i}}{y})$,
	all of the following hold:
	\begin{itemize}
		\item if $(\applySubst{\delta\theta_{m+i}}{x})|_p \in \Var(T_i)$, then $(\applySubst{\delta\theta_{m+i}}{y})|_p \in \Var(T_i) \cup \Terms(\cC,\cV\setminus\Var(T_i))$,
		and
		\item if $(\applySubst{\delta\theta_{m+i}}{y})|_p \in \Var(T_i)$, then $(\applySubst{\delta\theta_{m+i}}{x})|_p \in \Var(T_i) \cup \Terms(\cC,\cV\setminus\Var(T_i))$.
	\end{itemize}
This assumption implies that 
for such $x$, $y$, and $p$,
the terms $(\applySubst{\delta\theta_{m+i}}{x})|_p$ and $(\applySubst{\delta\theta_{m+i}}{y})|_p$ satisfy one of the following:
	\begin{enumerate}
		\renewcommand{\labelenumi}{(\alph{enumi})}
		\item both are rooted by function symbols,
		\item both are variables in $\Var(T_i)$,
		\item one is a variable in $\Var(T_i)$ and the other is a term in $\Terms(\cC,\cV\setminus\Var(T_i))$,
			or
		\item both are terms in $\Terms(\cC,\cV\setminus\Var(T_i))$.
	\end{enumerate}
For example,  both $\cPgcd|_{\Gamma_{\Condition{x < y}{\symb{true}}}}$ and $\cPgcd|_{\Gamma_{\Condition{y < }{\symb{true}}}}$ satisfy the above assumption.

Our idea of extending the previous transformation is the use of coding;
Roughly speaking, for $\Rule{\Gamma_{T'}}{\CALL(\Gamma_{T_i},\delta_i) \ScompSym \theta_{m+i}}$ with $1 \leq i \leq n$ and for all variables $x,y$ in $T'$, we apply \emph{coding} to $\applySubst{\delta\theta_{m+i}}{x}$ and $\applySubst{\delta\theta_{m+i}}{y}$.
A variable in $\Var(T_i)$, which is instantiated by substitutions generated from $\Gamma_{T_i}$, may prevent us from constructing a finite number of production rules (see Example~\ref{ex:non-regular} below).
For this reason, we expect any variable%
\footnote{ This is not the case where either (a) or (d) holds.}
in $\Var(\applySubst{\delta\theta_{m+i}}{x},\applySubst{\delta\theta_{m+i}}{y}) \cap \Var(T_i)$ to be coded with 
\begin{itemize}
	\item $\bot$ (the case where the precondition ``$p \in \Pos(\applySubst{\delta\theta_{m+i}}{x})\cap\Pos(\applySubst{\delta\theta_{m+1}}{y})$'' does not hold), 
	\item another variable in $\Var(\applySubst{\delta\theta_{m+i}}{x},\applySubst{\delta\theta_{m+i}}{y}) \cap \Var(T_i)$ (the case where (b) above holds), or 
	\item a constructor term without any variable in $\Var(T_i)$ (the case where (c) above holds).
\end{itemize}

\begin{definition}
\label{def:SSG-conversion}	
We denote the set of constructor terms appearing in substitutions in $\cP$ by $\Patterns{\cP}$, where such constructor terms are instantiated with a non-terminal $A$ introduced during the transformation below:
$\Patterns{\cP} = \left\{ \ApplySubst{\{ x \mapsto A \mid  x \in \Var(t) \}}{t} \mid \mbox{$\theta$ appears in $\cP$}, ~ s \in \VRange(\theta), ~ t \unlhd s \right\}$.%
\footnote{
The current definition of $\Patterns{\cP}$ is not well optimized and $\Patterns{\cP}$ may include some terms that are not necessary for the transformation.
However, for readability, we adopt this simpler definition.
}
We denote the set of variables appearing in $\cN$ by $\Vars{\cN}$:
$\Vars{\cN} = \bigcup_{\Gamma_{T'} \in \cN} \Var(T')$.
The RTG obtained from $\cG$ and variables $x_1,x_2$ in $T$, denoted by $\Range(\cG,T,x_1,x_2)$,
is $(\Gamma_{T}^{(x_1,x_2)}, \cN'\cup\cN_A,\cP_1'\cup\cP_2'\cup\cP_{\NT{A}{A}}\cup\cP_{\NT{A}{\bot}}\cup\cP_{\NT{\bot}{A}})$ such that 
\begin{itemize}
	\item $\cN' = 
		\{~ \Gamma_{T'}^{(x,y)}, ~ \Gamma_{T'}^{(x,t)}, ~ \Gamma_{T'}^{(t,y)} \mid 
		x,y \in \Vars{\cN}, ~ \Gamma_{T'} \in \cN, ~ t \in \Patterns{\cP}\cup\{\bot\} ~\}
		$,
	\item $\cN_A = 
		\{~ \NT{A}{A}, ~ \NT{A}{\bot}, ~ \NT{\bot}{A} ~\}$,
	\item $\cP_1' =
		\{~ \Rule{\Gamma_{T'}^{(t_1,t_2)}}{u} \mid \Rule{\Gamma_{T'}}{\theta} \in \cP, ~ \Gamma_{T'}^{(t_1,t_2)} \in \cN', ~ \xi_A = \{ x \mapsto A \mid x \in \Var(\applySubst{\theta}{t_1},\applySubst{\theta}{t_2}) \}, ~ u \in \langle \applySubst{\xi_A\theta}{t_1},$ $ \applySubst{\xi_A\theta}{t_2} \rangle_{\top} ~\}$, 
	\item $\cP_2' =
		\{~ \Rule{\Gamma_{T'}^{(t_1,t_2)}}{u}
		\mid
		\Rule{\Gamma_{T'}}{\CALL(\Gamma_{T''},\delta) \ScompSym \theta} \in \cP, ~ \Gamma_{T'}^{(t_1,t_2)} \in \cN', \xi_A = \{ x \mapsto A \mid x \in \Var(\applySubst{\delta\theta}{t_1},\applySubst{\delta\theta}{t_2})\setminus\Var(T'') \}, ~ ~ u \in \CodingNT{\applySubst{\xi_A\delta\theta}{t_1}}{\applySubst{\xi_A\delta\theta}{t_2}}{T''}
		~\}$,
	\item $\cP_{\NT{A}{A}}=
			\{~ \Rule{\NT{A}{A}}{u} \mid \symb{f}/m,\symb{g}/n \in \cC,
			~ u \in 
			\CodingNT{
			\symb{f}(A,\ldots,A)
			}{
			\symb{g}(A,\ldots,A)
			}{\top} 
			~\}$,
	\item $\cP_{\NT{A}{\bot}}=
			\{~ \Rule{\NT{A}{\bot}}{u} \mid \symb{f}/m \in \cC,
			~ u \in 
			\CodingNT{
			\symb{f}(A,\ldots,A)
			}{\bot}{\top} ~\}$,
			and
	\item $\cP_{\NT{\bot}{A}}=
			\{~ \Rule{\NT{\bot}{A}}{u} \mid \symb{g}/n \in \cC,
			~ u \in 
			\CodingNT{\bot}{
			\symb{g}(A,\ldots,A)
			}{\top} ~\}
			$,
\end{itemize}
where $\CodingNT{\cdot}{\cdot}{T'}$, which takes a goal clause $T'$ and two terms in $\Terms(\cF\cup\{A\},\Var(T'))$ as input and returns a set of terms in $\Terms(\cF\cup\cN'\cup\cN_A)$, is recursively defined as follows:
\begin{itemize}
	\item $\CodingNT{x}{y}{T'} = \{~\Gamma_{T'}^{(x,y)}~\}$, where $x,y\in\cV$,
	\item $\CodingNT{x}{t}{T'} = \{~ \Gamma_{T'}^{(x,t)} ~\}$, where $x \in \cV$ and $t \in \Patterns{\cP}$,
	\item $\CodingNT{x}{\bot}{T'} = \{~ \Gamma_{T'}^{(x,\bot)} ~\}$, where $x \in \cV$,
	\item $\CodingNT{t}{y}{T'} = \{~ \Gamma_{T'}^{(A,y)} ~\}$, where $y \in \cV$ and $t \in \Patterns{\cP}$,
	\item $\CodingNT{\bot}{y}{T'} = \{~ \Gamma_{T'}^{(\bot,y)} ~\}$, where $y \in \cV$,
	\item $\CodingNT{A}{A}{T'} = \{~ \NT{A}{A} ~\}$,
	\item $\CodingNT{A}{\bot}{T'} = \{~ \NT{A}{\bot} ~\}$,
	\item $\CodingNT{\bot}{A}{T'} = \{~ \NT{\bot}{A} ~\}$,
	\item $\CodingNT{\bot}{\symb{g}(t_1,\ldots,t_n)}{T'} = 
		\{~ \Pairsymb{\bot}{g}(u_1,\ldots,u_n) 
		\mid 
		1 \leq i \leq n, 
		~ u_i \in \CodingNT{\bot}{t_i}{T'} 
		~\}$,
	\item $\CodingNT{\symb{f}(s_1,\ldots,s_m)}{\bot}{T'} = 
		\{~ \Pairsymb{f}{\bot}(u_1,\ldots,u_m) 
		\mid 
		1 \leq i \leq m, 
		~ u_i \in \CodingNT{s_i}{\bot}{T'} 
		~\}$,
	\item $\CodingNT{A}{\symb{g}(t_1,\ldots,t_n)}{T'} = 
		\{~ \Pairsymb{f}{g}(u_1,\ldots,u_m,u_{m+1},\ldots,u_n) 
		\mid 
		\symb{f}/m \in \cC, 
		~ m < n, 
		~ 1 \leq i \leq m, 
		$ \linebreak $u_i \in \CodingNT{A}{t_i}{T'}, 
		~ 1 \leq j \leq n-m, 
		~ u_{m+j} \in \CodingNT{\bot}{t_{m+j}}{T'}
		 ~\}
		\cup
		\{~ \Pairsymb{f}{g}(u_1,\ldots,u_n,u_{n+1},\ldots,u_m)
		\mid 
		\symb{f}/m \in \cC, 
		~ m \geq n, 
		~ 1 \leq i \leq n, 
		~ u_i \in \CodingNT{A}{t_i}{T'}, 
		~ 1 \leq j \leq m - n, 
		~ u_{n+j} \in \CodingNT{\bot}{t_{n+j}}{T'}
		~\}
		$,
	\item $\CodingNT{\symb{f}(s_1,\ldots,s_m)}{A}{T'} = 
		\{~ \Pairsymb{f}{g}(u_1,\ldots,u_m,u_{m+1},\ldots,u_n) 
		\mid 
		~ \symb{g}/n \in \cC, 
		~ m < n, 
		~ 1 \leq i \leq m, 
		~ u_i \in \CodingNT{s_i}{A}{T'}, $ $
		~ 1 \leq j \leq n-m,
		~ u_{m+j} \in \CodingNT{\bot}{A}{T'}
		~\}
		\cup
		\{~ \Pairsymb{f}{g}(u_1,\ldots,u_n,u_{n+1},\ldots,u_m) 
		\mid 
		\symb{g}/n \in \cC, 
		~ m \geq n, 
		~ 1 \leq i \leq n, 
		~ u_i \in \CodingNT{s_i}{A}{T'}, 
		~ 1 \leq j \leq m - n,
		~ u_{n+j} \in \CodingNT{s_{n+j}}{\bot}{T'}
		~\}
		$,
	\item $\CodingNT{\symb{f}(s_1,\ldots,s_m)}{\symb{g}(t_1,\ldots,t_n)}{T'} = 
		\{~ \Pairsymb{f}{g}(u_1,\ldots,u_m,u_{m+1},\ldots,u_n) 
		\mid 
		1 \leq i \leq m, 
		~ u_i \in \CodingNT{s_i}{t_i}{T'}, 
		~ 1 \leq j \leq n-m,
		~ u_{m+j} \in \CodingNT{\bot}{t_{m+j}}{T'}
		~\}$ if $m < n$, and
	\item $\CodingNT{\symb{f}(s_1,\ldots,s_m)}{\symb{g}(t_1,\ldots,t_n)}{T'} = 
		\{~ \Pairsymb{f}{g}(u_1,\ldots,u_n,u_{n+1},\ldots,u_m) 
		\mid 
		1 \leq i \leq n, 
		~ u_i \in \CodingNT{s_i}{t_i}{T'},
		~ 1 \leq j \leq m-n,
		~ u_{n+j} \in \CodingNT{s_{n+j}}{\bot}{T'}
		~\}$ if $m \geq n$.
\end{itemize}
\end{definition}
Note that the non-terminal $\NT{A}{A}$ generates $\{ \Coding{t_1}{t_2} \mid t_1,t_2 \in \Terms(\cC) \}$, the non-terminal $\NT{A}{\bot}$ generates $\{ \Coding{t_1}{\bot} \mid t_1 \in \Terms(\cC) \}$, and the non-terminal $\NT{\bot}{A}$ generates $\{ \Coding{\bot}{t_2} \mid t_2 \in \Terms(\cC) \}$.
Note also that we generate only production rules that are reachable from $\Gamma_T^{(x_1,x_2)}$, and drop from $\cN'\cup\cN_A$ non-terminals not appearing in the generated production rules.

\begin{example}
Consider 
$\cGgcd=(\Gamma_{\Condition{x < y}{\symb{true}} \Eand \Condition{y < x}{\symb{true}}}, \{\Gamma_{\Condition{x < y}{\symb{true}} \Eand \Condition{y < x}{\symb{true}}}, \Gamma_{\Condition{x < y}{\symb{true}}}, \Gamma_{\Condition{y < x}{\symb{true}}}\},\cPgcd)$ in Section~\ref{sec:intro}.
We have that 
\begin{itemize}
	\item $\Patterns{\cPgcd} = \{\symb{0},\symb{s}(A),A\}$,
		and
	\item $\Vars{\{\Gamma_{\Condition{x < y}{\symb{true}} \Eand \Condition{y < x}{\symb{true}}}, \Gamma_{\Condition{x < y}{\symb{true}}}, \Gamma_{\Condition{y < x}{\symb{true}}}\}} = \{x,y\}$.
\end{itemize}
Let us focus on $\Gamma_{\Condition{x < y}{\symb{true}}}$ and $x,y$.
Since neither $\Gamma_{\Condition{x < y}{\symb{true}} \Eand \Condition{y < x}{\symb{true}}}$ nor $\Gamma_{\Condition{y < x}{\symb{true}}}$ is reachable from $\Gamma_{\Condition{x < y}{\symb{true}}}$ by $\cPgcd$, when we construct the RTG $\Range(\cGgcd,\Gamma_{\Condition{x < y}{\symb{true}}},x,y)$, we do not take into account $\Gamma_{\Condition{x < y}{\symb{true}} \Eand \Condition{y < x}{\symb{true}}},\Gamma_{\Condition{y < x}{\symb{true}}}$, and their rules.
The RTG $\Range(\cGgcd,\Gamma_{\Condition{x < y}{\symb{true}}},x,y)=(\Gamma_{\Condition{x < y}{\symb{true}}}^{(x,y)},\cN'\cup\cN_A,\cP_1'\cup\cP_2'\cup\cP_{\NT{A}{A}}\cup\cP_{\NT{A}{\bot}}\cup\cP_{\NT{\bot}{A}})$ is constructed as follows:
\begin{itemize}
	\item $\cN'=\{ 
		\Gamma_{\Condition{x < y}{\symb{true}}}^{(x,y)}, ~ 
		\Gamma_{\Condition{x < y}{\symb{true}}}^{(y,x)}, ~ 
		\Gamma_{\Condition{y < x}{\symb{true}}}^{(x,y)} \}$,
	\item $\cN_A=\{ 
		\NT{A}{A}, ~
		\NT{A}{\bot}, ~
		\NT{\bot}{A} \}$,
	\item $\cP_1'=\{ \Rule{\Gamma_{\Condition{x < y}{\symb{true}}}^{(x,y)}}{\Pairsymb{0}{s}(\NT{\bot}{A})} \}$,
		because $\Rule{\Gamma_{\Condition{x < y}{\symb{true}}}}{\{ x\mapsto \symb{0}, ~ y \mapsto \symb{s}(y_2) \}} \in \cPgcd$ and $\CodingNT{\symb{0}}{\symb{s}(A)}{\top}$ $= \{\Pairsymb{0}{s}(\NT{\bot}{A})\}$,
	\item $\cP_2'=\{ \Rule{\Gamma_{\Condition{x < y}{\symb{true}}}^{(x,y)}}{\Pairsymb{s}{s}(\Gamma_{\Condition{x < y}{\symb{true}}}^{(x,y)})} \}$,
		because $\Rule{\Gamma_{\Condition{x < y}{\symb{true}}}}{\CALL( \Gamma_{\Condition{x < y}{\symb{true}}} , \{ x_3 \mapsto x, ~ y_3 \mapsto y \} ) \ScompSym \{ x \mapsto \symb{s}(x_3), ~ y \mapsto \symb{s}(y_3) \}} \in \cPgcd$ and $\CodingNT{\symb{s}(x)}{\symb{s}(y)}{\Condition{x < y}{\symb{true}}} = \{ \Pairsymb{s}{s}(\Gamma_{\Condition{x < y}{\symb{true}}}^{(x,y)}) \}$,
	\item $\cP_{\NT{A}{A}}= \{~
			\Rule{\NT{A}{A}}{u}
			\mid
			u \in\{
			\Pairsymb{0}{0}, ~
			\Pairsymb{0}{s}(\NT{\bot}{A}), ~
			\Pairsymb{0}{true}, ~
			\Pairsymb{0}{false}, ~
			\Pairsymb{s}{0}(\NT{A}{\bot}), ~
			\Pairsymb{s}{s}(\NT{A}{A}), ~
			\Pairsymb{s}{\,true}(\NT{A}{\bot}), ~
			\Pairsymb{s}{\,false}(\NT{A}{\bot}), ~
			\Pairsymb{true}{0},$ $
			\Pairsymb{true}{\,s}(\NT{\bot}{A}), ~
			\Pairsymb{true}{\,true}, ~
			\Pairsymb{true}{\,false}, ~
			\Pairsymb{false}{0}, ~
			\Pairsymb{false}{\,s}(\NT{\bot}{A}), ~
			\Pairsymb{false}{\,true}, ~
			\Pairsymb{false}{\,false}
			\}
			~\}$,
	\item $\cP_{\NT{A}{\bot}}= \{~
			\Rule{\NT{A}{\bot}}{u}
			\mid
			u \in\{
			\Pairsymb{0}{\bot}, ~
			\Pairsymb{s}{\bot}(\NT{A}{\bot}), ~
			\Pairsymb{true}{\bot}, ~
			\Pairsymb{false}{\bot}
			\}
			~\}$,
			and
	\item $\cP_{\NT{\bot}{A}}= \{~
			\Rule{\NT{\bot}{A}}{u}
			\mid 
			u \in\{
			\Pairsymb{\bot}{0}, ~
			\Pairsymb{\bot}{s}(\NT{\bot}{A}), ~
			\Pairsymb{\bot}{true}, ~
			\Pairsymb{\bot}{false}
			\}
			~\}$.
\end{itemize}
For $\Gamma_{\Condition{y < x}{\symb{true}}}$ and $x,y$, we add $\Rule{\Gamma_{\Condition{y < x}{\symb{true}}}^{(x,y)}}{\Gamma_{\Condition{x < y}{\symb{true}}}^{(y,x)}}$ to the above production rules.
Rules that are not reachable from $\Gamma_{\Condition{x < y}{\symb{true}}}^{(x,y)}$ or $\Gamma_{\Condition{y < x}{\symb{true}}}^{(x,y)}$ can be dropped from $\Range(\cGgcd,\Gamma_{\Condition{x < y}{\symb{true}}},x,y)$, obtaining an RTG, denoted by $\cGgcddomain$, with the following production rules:
\[
\begin{array}{@{}r@{\>}c@{\>}l@{~~~~~~~~}r@{\>}c@{\>}l@{}} 
\Rule{\Gamma_{\Condition{x < y}{\symb{true}}}^{(x,y)} &}{& \Pairsymb{0}{s}(\NT{\bot}{A}) \mid \Pairsymb{s}{s}(\Gamma_{\Condition{x < y}{\symb{true}}}^{(x,y)})}
&
\Rule{\NT{A}{\bot} &}{& \Pairsymb{0}{\bot} \mid \Pairsymb{s}{\bot}(\NT{A}{\bot}) \mid \Pairsymb{true}{\bot} \mid \Pairsymb{false}{\bot}} 
\\[5pt]
\Rule{\Gamma_{\Condition{x < y}{\symb{true}}}^{(y,x)} &}{& \Pairsymb{s}{0}(\NT{A}{\bot}) \mid \Pairsymb{s}{s}(\Gamma_{\Condition{x < y}{\symb{true}}}^{(y,x)})}
&
\Rule{\NT{\bot}{A} &}{& \Pairsymb{\bot}{0} \mid \Pairsymb{\bot}{s}(\NT{\bot}{A}) \mid \Pairsymb{\bot}{true} \mid \Pairsymb{\bot}{false}} 
\\[5pt]	
\Rule{\Gamma_{\Condition{y < x}{\symb{true}}}^{(x,y)} &}{& \Gamma_{\Condition{x < y}{\symb{true}}}^{(y,x)}} 
\\[5pt]
\end{array}
\]	
It is easy to see that
\begin{itemize}
	\item $\Lang(\cGgcddomain,\Gamma_{\Condition{x < y}{\symb{true}}}^{(x,y)}) \subseteq \Terms(\{\Pairsymb{0}{s},\Pairsymb{s}{s},\Pairsymb{\bot}{0},\Pairsymb{\bot}{s},\Pairsymb{\bot}{true},\Pairsymb{\bot}{false}\})$, 
	\item $\Lang(\cGgcddomain,\Gamma_{\Condition{y < x}{\symb{true}}}^{(x,y)}) \subseteq \Terms(\{\Pairsymb{s}{0},\Pairsymb{s}{s},\Pairsymb{0}{\bot},\Pairsymb{s}{\bot},\Pairsymb{true}{\bot},\Pairsymb{false}{\bot}\})$,
\end{itemize}
and hence, there is no shared constant between the two sets.
This means that 
\[
\Lang(\cGgcddomain,\Gamma_{\Condition{x < y}{\symb{true}}}^{(x,y)}) \cap \Lang(\cGgcddomain,\Gamma_{\Condition{y < x}{\symb{true}}}^{(x,y)}) = \emptyset
\]
and hence
\[
\SubstSet{\cGgcd}{\Gamma_{\Condition{x < y}{\symb{true}}}} \cap \SubstSet{\cGgcd}{\Gamma_{\Condition{y < x}{\symb{true}}}} = \emptyset.
\]
Note that the emptiness problem of RTGs is decidable, and hence we can decide the emptiness problem of $\Lang(\cGgcddomain,\Gamma_{\Condition{x < y}{\symb{true}}}^{(x,y)}) \cap \Lang(\cGgcddomain,\Gamma_{\Condition{y < x}{\symb{true}}}^{(x,y)})$.
\end{example}

The following example illustrates both why not all SSGs can be transformed and why we adopt the assumption.
\begin{example}
\label{ex:non-regular}
Let $\cGcex$ be the following SSG which does not satisfy the assumption:
\[
(\Gamma_{\Condition{x}{y}},\{\Gamma_{\Condition{x}{y}}\},
\{~
\Rule{\Gamma_{\Condition{x}{y}}}{
\{x\mapsto\symb{0}, ~ y \mapsto\symb{0}\}
\mid
\CALL(\Gamma_{\Condition{x}{y}},\{x'\mapsto x,~y'\mapsto y\}) \ScompSym \{ x \mapsto \symb{s}(x'), ~ y \mapsto \symb{s}(\symb{s}(y')) \}
}
~\}).
\]
The domains of substitutions generated by $\cGcex$ w.r.t.\ $x,y$ is 
$\{ (\symb{s}^n(\symb{0}),\symb{s}^{2n}(\symb{0})) \mid n \geq 0 \}$ which is not recognizable. 
This implies that there is no RTG generating this set, while every substitution appearing in $\cGcex$ preserves linearity.

Let us now apply $\Range$ to $\cGcex$, while $\cGcex$ does not satisfy the assumption.
To generate rules from $\Rule{\Gamma_{\Condition{x}{y}}}{
\CALL(\Gamma_{\Condition{x}{y}},\{x'\mapsto x,~y'\mapsto y\}) \ScompSym \{ x \mapsto \symb{s}(x'), ~ y \mapsto \symb{s}(\symb{s}(y')) \}
}$, we need to compute $\CodingNT{\symb{s}(x)}{\symb{s}(\symb{s}(y))}{\Gamma_{\Condition{x}{y}}}$, resulting in $\Pairsymb{s}{s}(\CodingNT{x}{\symb{s}(y)}{\Gamma_{\Condition{x}{y}}})$.
The first argument $x$ of $\CodingNT{x}{\symb{s}(y)}{\Gamma_{\Condition{x}{y}}}$ cannot be instantiated any more without $\Gamma_{\Condition{x}{y}}$.
Then, let us define $\CodingNT{x}{\symb{s}(y)}{\Gamma_{\Condition{x}{y}}}=\Gamma_{\Condition{x}{y}}^{(x,\symb{s}(y))}$.
Then, the non-terminal $\Gamma_{\Condition{x}{y}}^{(x,\symb{s}(y))}$ is not generated in computing the set of non-terminals ($\cN'\cup\cN_A$ in Definition~\ref{def:SSG-conversion}).
Let us now add $\Gamma_{\Condition{x}{y}}^{(x,\symb{s}(y))}$ into the set of non-terminals, and generate rules for $\Gamma_{\Condition{x}{y}}^{(x,\symb{s}(y))}$ from $\Rule{\Gamma_{\Condition{x}{y}}}{
\CALL(\Gamma_{\Condition{x}{y}},\{x'\mapsto x,~y'\mapsto y\}) \ScompSym \{ x \mapsto \symb{s}(x'), ~ y \mapsto \symb{s}(\symb{s}(y')) \}
}$.
Then, we need non-terminal $\Gamma_{\Condition{x}{y}}^{(x,\symb{s}(\symb{s}(\symb{s}(y))))}$.
In summary, we need infinitely many non-terminals and their production rules.
The assumption enables us to avoid such a case.
\end{example}

Finally, we show correctness of the transformation in Definition~\ref{def:SSG-conversion}, i.e., that
$\Lang(\Range(\cG,T,x_1,x_2))$ is an overapproximation of the ranges of ground substitutions obtained from $\SubstSet{\cG}{\Gamma_T}$ w.r.t.\ $x_1,x_2$.
We first show some auxiliary lemmas, and then show the main theorem.
\begin{lemma}
\label{lem:CodingNT-subst}
Let $T$ be a goal clause, $t_1,t_2 \in \Terms(\cC,\cV)$, $\theta \in \Subst(\cC)$, $\xi \in \Subst(\cC)$ such that 
	$\Dom(\theta)\cap\Dom(\xi)=\emptyset$ and
	$\Dom(\theta)\cup\Dom(\xi) = \Var(t_1,t_2)$.
Note that $\theta\cup\xi=\theta\xi=\xi\theta$.
Let $\xi_A = \{ x \mapsto A \mid x \in \Dom(\xi)\}$ and $u \in \CodingNT{\applySubst{\xi_A}{t_1}}{\applySubst{\xi_A}{t_2}}{T}$.
Suppose that for all positions $p \in 
\Pos(t_1)\cap\Pos(t_2)$,
	both of the following hold:
	\begin{itemize}
		\item if $t_1|_p \in \Dom(\theta)$, then $t_2|_p \in \Dom(\theta) \cup \Terms(\cC,\Dom(\xi))$,
		and
		\item if $t_2|_p \in \Dom(\theta)$, then $t_1|_p \in \Dom(\theta) \cup \Terms(\cC,\Dom(\xi))$.
	\end{itemize}
Then, all of the following hold:
\begin{enumerate}
\renewcommand{\labelenumi}{(\alph{enumi})}
	\item $\Pos(\Coding{\applySubst{\theta\xi}{t_1}}{\applySubst{\theta\xi}{t_2}}) \supseteq \Pos(\Coding{t_1}{t_2}) \supseteq \Pos(u)$
		(i.e., $\Pos(t_1) \cup \Pos(t_2) \supseteq \Pos(u)$),
	\item for any position $p \in \Pos(t_1)\cap\Pos(t_2)$, all of the following hold:
	\begin{itemize}
		\item if $t_1|_p=x\in \Dom(\theta)$ and $t_2|_p=y \in \Dom(\theta)$, then
			$(\Coding{t_1}{t_2})|_p = \Pairsymb{\var{x}}{\var{y}}$ (i.e., $(\Coding{\applySubst{\theta\xi}{t_1}}{\applySubst{\theta\xi}{t_2}})|_p$ $= \Coding{\applySubst{\theta}{x}}{\applySubst{\theta}{y}})$) and 
			$u|_p=\Gamma_T^{(x,y)}$
		\item if $t_1|_p=x\in \Dom(\theta)$ and $t_2|_p=y \in \Dom(\xi)$, then
			$(\Coding{t_1}{t_2})|_p = \Pairsymb{\var{x}}{\var{y}}$ (i.e., $(\Coding{\applySubst{\theta\xi}{t_1}}{\applySubst{\theta\xi}{t_2}})|_p$ $= \Coding{\applySubst{\theta}{x}}{\applySubst{\xi}{y}}$) and
			$u|_p=\Gamma_T^{(x,A)}$
		\item if $t_1|_p=x\in \Dom(\xi)$ and $t_2|_p=y \in \Dom(\theta)$, then
			$(\Coding{t_1}{t_2})|_p = \Pairsymb{\var{x}}{\var{y}}$ (i.e, $(\Coding{\applySubst{\theta\xi}{t_1}}{\applySubst{\theta\xi}{t_2}})|_p$ $= \Coding{\applySubst{\xi}{x}}{\applySubst{\theta}{y}}$) and
			$u|_p=\Gamma_T^{(A,y)}$
		\item if $t_1|_p=x\in \Dom(\xi)$ and $t_2|_p=y \in \Dom(\xi)$, then
			$(\Coding{t_1}{t_2})|_p = \Pairsymb{\var{x}}{\var{y}}$ (i.e., $(\Coding{\applySubst{\theta\xi}{t_1}}{\applySubst{\theta\xi}{t_2}})|_p$ $= \Coding{\applySubst{\xi}{x}}{\applySubst{\xi}{y}}$) and
			$u|_p=\NT{A}{A}$
		\item if $t_1|_p=x\in \Dom(\theta)$ and $\Root(t_2|_p)=\symb{g} \in \cC$, then 
			$\Root((\Coding{t_1}{t_2})|_p) = \Pairsymb{\var{x}}{g}$ (i.e., $(\Coding{\applySubst{\theta\xi}{t_1}}{\applySubst{\theta\xi}{t_2}})|_p$ $= \Coding{\applySubst{\theta}{x}}{\ApplySubst{\xi}{t_2|_p}}$) and
			$u|_p=\Gamma_T^{(x,\ApplySubst{\xi_A}{t_2|_p})}$
		\item if $t_1|_p=x\in \Dom(\xi)$ and $\Root(t_2|_p)=\symb{g} \in \cC$, then 
			$\Root((\Coding{t_1}{t_2})|_p) = \Pairsymb{\var{x}}{g}$ (i.e., $(\Coding{\applySubst{\theta\xi}{t_1}}{\applySubst{\theta\xi}{t_2}})|_p$ $= \Coding{\applySubst{\xi}{x}}{\ApplySubst{\xi}{t_2|_p}}$),
			$u|_p \in \CodingNT{A}{\ApplySubst{\xi_A}{t_2|_p}}{\bot}$,
			and
			there exists a term $t'_1 \in \Terms(\cC,\cV)$ and a term $u' \in \CodingNT{A}{\ApplySubst{\xi_A}{t_2|_p}}{\bot}$ such that
				$t'_1 \leq \applySubst{\xi}{x}$,
				$u' = \Coding{\applySubst{\xi'_A}{t'_1}}{t_2|_p}$,
			\begin{itemize}
				\item 
				for all $q \in \Pos(t'_1)\cap\Pos(t_2|_p)$, $\ApplySubst{\xi'_A}{t'_1|_q}=A$ if and only if $t_2|_{pq}=A$,%
					\footnote{
					This implies that if $q \in \Pos(t'_1)\cap\Pos(t_2|_p)$, then $pq \in \Pos(u')$ and $u|_{pq}= \NT{A}{A}$.
					}
				and
				\item 
				for all $q \in \Pos(t'_1)\setminus\Pos(t_2|_p)$, $\ApplySubst{\xi'_A}{t'_1|_q}=A$,%
					\footnote{
					This implies that if $q \in \Pos(t'_1)\setminus\Pos(t_2|_p)$, then $pq \in \Pos(u')$ and $u|_{pq}= \NT{A}{\bot}$.
					}
			\end{itemize}
			where $\xi'_A=\{x\mapsto A \mid x \in \Dom(t'_1) \}$,
		\item if $\Root(t_1|_p)=\symb{f} \in \cC$ and $t_2|_p=y \in \Dom(\theta)$, then 
			$\Root((\Coding{t_1}{t_2})|_p) = \Pairsymb{f}{\var{y}}$ (i.e., $(\Coding{\applySubst{\theta\xi}{t_1}}{\applySubst{\theta\xi}{t_2}})|_p$ $= \Coding{\ApplySubst{\xi}{t_1|_p}}{\applySubst{\theta}{y}}$) and
			$u|_p=\Gamma_T^{(\ApplySubst{\xi_A}{t_1|_p},y)}$
		\item if $\Root(t_1|_p)=\symb{f} \in \cC$ and $t_2|_p=y \in \Dom(\xi)$, then 
			$\Root((\Coding{t_1}{t_2})|_p) = \Pairsymb{f}{\var{y}}$ (i.e., $(\Coding{\applySubst{\theta\xi}{t_1}}{\applySubst{\theta\xi}{t_2}})|_p$ $= \Coding{\ApplySubst{\xi}{t_1|_p}}{\applySubst{\xi}{y}}$),
			$u|_p \in \CodingNT{\ApplySubst{\xi_A}{t_1|_p}}{A}{\top}$,
			and
			there exists a term $t'_2 \in \Terms(\cC,\cV)$ and a term $u' \in \CodingNT{\ApplySubst{\xi_A}{t_1|_p}}{A}{\top}$ such that
				$t'_2 \leq \applySubst{\xi}{y}$,
				$u' = \Coding{t_1|_p}{\applySubst{\xi'_A}{t'_2}}$,
			\begin{itemize}
				\item 
				for all $q \in \Pos(t'_2)\cap\Pos(t_1|_p)$, $\ApplySubst{\xi'_A}{t'_2|_q}=A$ if and only if $t_1|_{pq}=A$,
				and
				\item 
				for all $q \in \Pos(t'_2)\setminus\Pos(t_1|_p)$, $\ApplySubst{\xi'_A}{t'_2|_q}=A$,
			\end{itemize}
			where $\xi'_A=\{x\mapsto A \mid x \in \Dom(t'_2) \}$,
		\item if $\Root(t_1|_p)=\symb{f}\in \cC$ and $\Root(t_2|_p)=\symb{g} \in \cC$, then 
			$\Root((\Coding{t_1}{t_2})|_p) = \Root((\Coding{\applySubst{\theta\xi}{t_1}}{\applySubst{\theta\xi}{t_2}})|_p) = \Root(u|_p) = \Pairsymb{f}{g}$ (i.e., $(\Coding{\applySubst{\theta\xi}{t_1}}{\applySubst{\theta\xi}{t_2}})|_p = \Coding{\ApplySubst{\theta\xi}{t_1|_p}}{\ApplySubst{\theta\xi}{t_2|_p}}$),
	\end{itemize}
	\item for any position $p \in \Pos(t_1)\setminus\Pos(t_2)$, both of the following hold:
	\begin{itemize}
		\item if $t_1|_p = x \in \Dom(\theta)$, then 
			$(\Coding{t_1}{t_2})|_p = \Pairsymb{\var{x}}{\bot}$ (i.e, $(\Coding{\applySubst{\theta\xi}{t_1}}{\applySubst{\theta\xi}{t_2}})|_p=(\Coding{\applySubst{\theta}{x}}{\bot})|_p$) and
			$u|_p=\Gamma_T^{(x,\bot)}$,
		\item if $t_1|_p = x \in \Dom(\xi)$, then 
			$(\Coding{t_1}{t_2})|_p = \Pairsymb{\var{x}}{\bot}$ (i.e., $(\Coding{\applySubst{\theta\xi}{t_1}}{\applySubst{\theta\xi}{t_2}})|_p=(\Coding{\applySubst{\xi}{x}}{\bot})|_p$) and
			$u|_p=\NT{A}{\bot}$,
		and
		\item if $\Root(t_1|_p) = \symb{f} \in \cC$, then 
			$\Root((\Coding{t_1}{t_2})|_p) = \Root((\Coding{\applySubst{\theta\xi}{t_1}}{\applySubst{\theta\xi}{t_2}})|_p) = \Root(u|_p) = \Pairsymb{f}{\bot}$ (i.e., $(\Coding{\applySubst{\theta\xi}{t_1}}{\applySubst{\theta\xi}{t_2}})|_p=(\Coding{\ApplySubst{\theta\xi}{t_1|_p}}{\bot})|_p$),
	\end{itemize}
	and
	\item for any position $p \in \Pos(t_2)\setminus\Pos(t_1)$, both of the following hold:
	\begin{itemize}
		\item if $t_2|_p = y \in \Dom(\theta)$, then 
			$(\Coding{t_1}{t_2})|_p = \Pairsymb{\bot}{\var{y}}$ (i.e., $(\Coding{\applySubst{\theta\xi}{t_1}}{\applySubst{\theta\xi}{t_2}})|_p=(\Coding{\bot}{\applySubst{\theta}{y}})|_p$) and
			$u|_p=\Gamma_T^{(\bot,y)}$,
		\item if $t_2|_p = y \in \Dom(\xi)$, then 
			$(\Coding{t_1}{t_2})|_p = \Pairsymb{\bot}{\var{y}}$ (i.e., $(\Coding{\applySubst{\theta\xi}{t_1}}{\applySubst{\theta\xi}{t_2}})|_p=(\Coding{\bot}{\applySubst{\xi}{y}})|_p$) and
			$u|_p=\NT{\bot}{A}$,
		and
		\item if $\Root(t_2|_p) = \symb{g} \in \cC$, then
			$\Root((\Coding{t_1}{t_2})|_p) = \Root((\Coding{\applySubst{\theta\xi}{t_1}}{\applySubst{\theta\xi}{t_2}})|_p) = \Root(u|_p) = \Pairsymb{\bot}{g}$ (i.e., $(\Coding{\applySubst{\theta\xi}{t_1}}{\applySubst{\theta\xi}{t_2}})|_p=(\Coding{\bot}{\ApplySubst{\theta\xi}{t_2|_p}})|_p$).
	\end{itemize}
\end{enumerate}
\end{lemma}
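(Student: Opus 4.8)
The plan is to prove the statement by structural induction on the pair $(t_1,t_2)$, following in lockstep the recursive clauses that define $\CodingNT{\cdot}{\cdot}{T}$ and $\Coding{\cdot}{\cdot}$: the conclusion is really a position-by-position invariant tying together the template $u$, the coding $\Coding{t_1}{t_2}$ of the two patterns, and the coding $\Coding{\applySubst{\theta\xi}{t_1}}{\applySubst{\theta\xi}{t_2}}$ of their common instance. The first inclusion in part~(a) needs no induction: since $\Pos(\Coding{s}{s'})=\Pos(s)\cup\Pos(s')$ and $\Pos(\applySubst{\sigma}{t})\supseteq\Pos(t)$ for every substitution $\sigma$, we get $\Pos(\Coding{\applySubst{\theta\xi}{t_1}}{\applySubst{\theta\xi}{t_2}})=\Pos(\applySubst{\theta\xi}{t_1})\cup\Pos(\applySubst{\theta\xi}{t_2})\supseteq\Pos(t_1)\cup\Pos(t_2)=\Pos(\Coding{t_1}{t_2})$. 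The remaining inclusion $\Pos(\Coding{t_1}{t_2})\supseteq\Pos(u)$ and the case descriptions (b)--(d) are what the induction establishes.

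First I would dispose of the base cases, namely the leaf clauses of $\CodingNT{\cdot}{\cdot}{T}$ whose right-hand side is a singleton consisting of a single non-terminal. In each of these $\Pos(u)=\{\varepsilon\}=\Pos(\Coding{t_1}{t_2})$, the non-terminal is exactly the one named in the corresponding bullet of (b)--(d), and the asserted shapes of $\Coding{t_1}{t_2}|_\varepsilon$ and of $\Coding{\applySubst{\theta\xi}{t_1}}{\applySubst{\theta\xi}{t_2}}|_\varepsilon$ follow by unfolding $\Coding{\cdot}{\cdot}$ once, using $\theta\xi=\theta\cup\xi$ so that, for a variable $t_i|_p$, $\applySubst{\theta\xi}{t_i|_p}$ is $\applySubst{\theta}{t_i|_p}$ or $\applySubst{\xi}{t_i|_p}$ according to whether $t_i|_p\in\Dom(\theta)$ or $t_i|_p\in\Dom(\xi)$, variables being treated as nullary symbols throughout. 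The compatibility hypothesis is exactly what rules out the combinations with no matching clause --- for instance $t_1|_p\in\Dom(\theta)$ while $t_2|_p$ is constructor-rooted and carries a $\Dom(\theta)$-variable below it --- so the enumeration of cases is exhaustive.

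For the inductive step there are two shapes. If both $t_1$ and $t_2$ are constructor-rooted, say $t_1=\symb{f}(s_1,\dots,s_m)$ and $t_2=\symb{g}(s'_1,\dots,s'_n)$, then $\CodingNT{\cdot}{\cdot}{T}$ and $\Coding{\cdot}{\cdot}$ both place $\Pairsymb{f}{g}$ at the root and recurse on the matching argument pairs $(s_i,s'_i)$ for $i\le\min(m,n)$, padding the shorter side with $\bot$; each restricted pair still satisfies the compatibility hypothesis, so the induction hypothesis yields the claims at every position $i\,p'$, while for the padding positions I would invoke a trivial auxiliary fact about the pure-$\bot$ clauses: for every $w\in\CodingNT{\bot}{s}{T}$ one has $\Pos(w)=\Pos(s)$ with $w$ carrying $\NT{\bot}{A}$ where $s$ (after $\xi_A$) is $A$ and $\Gamma_T^{(\bot,y)}$ where $s$ is a $\Dom(\theta)$-variable $y$, and symmetrically for $\CodingNT{s}{\bot}{T}$. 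The second shape is when one side, after $\xi_A$, is $A$ while the other is constructor-rooted; then $\CodingNT{\cdot}{\cdot}{T}$ uses the clauses for $\CodingNT{A}{\symb{g}(\dots)}{T}$ or $\CodingNT{\symb{f}(\dots)}{A}{T}$, which ``expand'' the wildcard $A$ against a chosen constructor symbol and then recurse --- exactly the situation of the sixth and eighth bullets of~(b). There the required witness $t'_1$ (resp.\ $t'_2$) is obtained by reading off, along the expansion that $u$ commits to, the constructor skeleton of $u$ truncated to $A$ at each of its non-terminal leaves, after which one checks the refinement $t'_1\le\applySubst{\xi}{x}$ and the two position-matching side conditions.

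I expect this last family of cases to be the main obstacle: reconciling the non-terminal template $u$ with the concrete constructor term $\applySubst{\xi}{x}$ across the $A$-expansion clauses, verifying $t'_1\le\applySubst{\xi}{x}$ together with the conditions ``$A$ on one side iff $A$ on the other'' at shared positions and ``$A$'' at the extra positions, and threading these through the $\bot$-padding of the coding. The rest is bookkeeping: checking that whenever a conclusion names $u|_p$ we really have $p\in\Pos(u)$ --- which follows from part~(a) together with the observation that along any path to a position of $\Pos(t_1)\cap\Pos(t_2)$ every proper ancestor is constructor-rooted on both sides, so $\CodingNT{\cdot}{\cdot}{T}$ keeps recursing --- and keeping the nine cases of~(b) together with (c) and (d) aligned with the matching clauses of the two mutual recursions. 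The inclusion $\Pos(\Coding{t_1}{t_2})\supseteq\Pos(u)$ of~(a) then falls out of the same induction, since every position that $\CodingNT{\cdot}{\cdot}{T}$ creates is already a position of $\Coding{t_1}{t_2}$.
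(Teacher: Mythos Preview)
Your proposal is correct and the case analysis you carry out is exactly the one the paper has in mind, but you organize the induction differently: the paper fixes $t_1,t_2,u$ once and for all and proves (b)--(d) by induction on the length of the position $p$, making at each step the case distinction on the shape of $t_1|_p$ and $t_2|_p$, whereas you proceed by structural induction on the pair $(t_1,t_2)$ following the recursive clauses of $\CodingNT{\cdot}{\cdot}{T}$. The two schemes are interchangeable here --- stepping from $p$ to $p\cdot i$ amounts to passing from $(t_1,t_2)$ to the $i$-th argument pair --- so the actual case list is identical; your formulation has the small advantage that it tracks the definition of $\CodingNT{\cdot}{\cdot}{T}$ more directly and makes the construction of the witnesses $t'_1,t'_2$ in the $A$-expansion bullets more transparent, while the paper's position-length induction is marginally lighter because it avoids restating the hypotheses for the subterm pairs. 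Part~(a) is handled the same way in both: the paper just calls it ``trivial by definition,'' which is your observation $\Pos(\Coding{s}{s'})=\Pos(s)\cup\Pos(s')$ together with $\Pos(\applySubst{\sigma}{t})\supseteq\Pos(t)$.
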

\begin{proof}
By definition, the claim (a) is trivial.
The claims (b)--(d) can be proved by induction on the length of $p$.
For the claims (b), (c), and (d), we make a case distinction depending on
what $t_1|_p$ and $t_2|_p$ are,
what $t_1|_p$ is, and
what $t_2|_p$ is, respectively.
\qed
\end{proof}

\begin{lemma}
\label{lem:CodingNT-context}
Let $T$ be a goal clause, $t_1,t_2 \in \Terms(\cC,\cV)$, $\theta \in \Subst(\cC)$, $\xi \in \Subst(\cC)$ such that 
	$\Dom(\theta)\cap\Dom(\xi)=\emptyset$ and
	$\Dom(\theta)\cup\Dom(\xi) = \Var(t_1,t_2)$.
Note that $\theta\cup\xi=\theta\xi=\xi\theta$.
Let $\xi_A = \{ x \mapsto A \mid x \in \Dom(\xi)\}$.
Suppose that for all positions $p \in 
\Pos(t_1)\cap\Pos(t_2)$,
	both of the following hold:
	\begin{itemize}
		\item if $t_1|_p \in \cV\cap\Dom(\theta)$, then $t_2|_p \in \Dom(\theta) \cup \Terms(\cC,\Dom(\xi))$,
		and
		\item if $t_2|_p \in \cV\cap\Dom(\theta)$, then $t_1|_p \in \Dom(\theta) \cup \Terms(\cC,\Dom(\xi))$.
	\end{itemize}
Then, there exists a term $u \in \CodingNT{\applySubst{\xi_A}{t_1}}{\applySubst{\xi_A}{t_2}}{T}$, a context $C[\,] \in \Terms((\cC\cup\{\bot\})^2\cup\{\Hole\})$, terms $S_1,\ldots,S_n$, and non-terminals $\Gamma_1,\ldots,\Gamma_n$ such that
	$\Coding{\applySubst{\theta\xi}{t_1}}{\applySubst{\theta\xi}{t_2}} = C[S_1,\ldots,S_n]$,
	$u=C[\Gamma_1,\ldots,\Gamma_n]$,
	and
	for all $1 \leq i \leq n$, all of the following hold:
	\begin{itemize}
		\item $S_i=\Pairsymb{\var{x}}{\var{y}}$ if and only if\/ $\Gamma_i=\Gamma_T^{(x,y)}$,
		\item $S_i=\Coding{\applySubst{\theta}{x}}{\ApplySubst{\xi}{t_2|_p}}$ if and only if\/ $\Gamma_i=\Gamma_T^{(x,\ApplySubst{\xi_A}{t_2|_p})}$ for some $p \in \Pos(t_2)$,
		\item $S_i=\Coding{\applySubst{\theta}{x}}{\bot}$ if and only if\/ $\Gamma_i=\Gamma_T^{(x,\bot)}$,
		\item $S_i=\Coding{\ApplySubst{\xi}{t_1|_p}}{\applySubst{\theta}{y}}$ if and only if\/ $\Gamma_i=\Gamma_T^{(\ApplySubst{\xi_A}{t_1|_p},y)}$ for some $p \in \Pos(t_1)$,
		\item $S_i=\Coding{\bot}{\applySubst{\theta}{y}}$ if and only if\/ $\Gamma_i=\Gamma_T^{(\bot,y)}$,
		\item $S_i=\Coding{\ApplySubst{\xi}{t_1|_p}}{\ApplySubst{\xi}{t_2|_p}}$ for some $p \in \Pos(\applySubst{\xi}{t_1})\cap\Pos(\applySubst{\xi}{t_2})$ if and only if\/ $\Gamma_i=\NT{A}{A}$,
		\item $S_i=\Coding{\ApplySubst{\xi}{t_1|_p}}{\bot}$ for some $p \in \Pos(\applySubst{\xi}{t_1})\setminus\Pos(\applySubst{\xi}{t_2})$ if and only if\/ $\Gamma_i=\NT{A}{\bot}$,
		and
		\item $S_i=\Coding{\bot}{\ApplySubst{\xi}{t_2|_p}}$ for some $p \in \Pos(\applySubst{\xi}{t_2})\setminus\Pos(\applySubst{\xi}{t_1})$ if and only if\/ $\Gamma_i=\NT{\bot}{A}$.
	\end{itemize}
\end{lemma}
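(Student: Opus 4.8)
The plan is to derive Lemma~\ref{lem:CodingNT-context} as a repackaging of Lemma~\ref{lem:CodingNT-subst}, which already supplies a witness $u$ together with a complete position-by-position description of how $u$, $\Coding{t_1}{t_2}$ and $\Coding{\applySubst{\theta\xi}{t_1}}{\applySubst{\theta\xi}{t_2}}$ relate; all that remains is to read off a context from that description. First I would invoke Lemma~\ref{lem:CodingNT-subst} under the same hypotheses, fix the term $u \in \CodingNT{\applySubst{\xi_A}{t_1}}{\applySubst{\xi_A}{t_2}}{T}$ it provides, and record claim (a), i.e.\ $\Pos(u) \subseteq \Pos(\Coding{t_1}{t_2}) \subseteq \Pos(\Coding{\applySubst{\theta\xi}{t_1}}{\applySubst{\theta\xi}{t_2}})$, together with the case lists in claims (b)--(d).

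Next I would build the context. Let $P$ be the set of positions $p$ of $u$ with $u|_p \in \cN' \cup \cN_A$; since every element of $\cN' \cup \cN_A$ is a constant, the positions in $P$ are pairwise incomparable, and $P$ is finite because $\Coding{\applySubst{\theta\xi}{t_1}}{\applySubst{\theta\xi}{t_2}}$ is a finite tree. Enumerate $P=\{p_1,\ldots,p_n\}$, put $S_i := (\Coding{\applySubst{\theta\xi}{t_1}}{\applySubst{\theta\xi}{t_2}})|_{p_i}$ and $\Gamma_i := u|_{p_i}$, and let $C$ be obtained from $\Coding{\applySubst{\theta\xi}{t_1}}{\applySubst{\theta\xi}{t_2}}$ by replacing, for each $i$, the subterm at $p_i$ by a hole $\Hole$. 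Then $\Coding{\applySubst{\theta\xi}{t_1}}{\applySubst{\theta\xi}{t_2}} = C[S_1,\ldots,S_n]$ by construction, and it remains to check that $C \in \Terms((\cC\cup\{\bot\})^2\cup\{\Hole\})$ and $u = C[\Gamma_1,\ldots,\Gamma_n]$. For this, consider any position $q$ of $u$ that is neither equal to nor below any $p_i$: by the case list of Lemma~\ref{lem:CodingNT-subst} such a $q$ falls under a sub-case in which both sides are function-rooted, or lies in the interior of one of the $\CodingNT{A}{\cdot}{\bot}$- or $\CodingNT{\cdot}{A}{\top}$-expansions discussed below, and in every such sub-case the root symbol at $q$ is the same in $u$ and in $\Coding{\applySubst{\theta\xi}{t_1}}{\applySubst{\theta\xi}{t_2}}$ and belongs to $(\cC\cup\{\bot\})^2$; combined with claim (a) this gives both identities.

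Then I would verify the biconditionals. Each $p_i$ falls under exactly one of the sub-cases of Lemma~\ref{lem:CodingNT-subst}(b)--(d) whose outcome is a non-terminal, so $\Gamma_i$ is one of $\Gamma_T^{(x,y)}$, $\Gamma_T^{(x,\ApplySubst{\xi_A}{t_2|_p})}$ (which subsumes $\Gamma_T^{(x,A)}$), $\Gamma_T^{(x,\bot)}$, $\Gamma_T^{(\ApplySubst{\xi_A}{t_1|_p},y)}$ (which subsumes $\Gamma_T^{(A,y)}$), $\Gamma_T^{(\bot,y)}$, $\NT{A}{A}$, $\NT{A}{\bot}$ or $\NT{\bot}{A}$, and that same sub-case records $S_i = (\Coding{\applySubst{\theta\xi}{t_1}}{\applySubst{\theta\xi}{t_2}})|_{p_i}$ in the shape named by the corresponding bullet of Lemma~\ref{lem:CodingNT-context}. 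Since these non-terminal shapes are pairwise syntactically distinct, the value of $\Gamma_i$ determines the sub-case, so both directions of each biconditional hold at once.

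The step I expect to be the main obstacle is the two mixed sub-cases, where one of $t_1|_p, t_2|_p$ is a variable of $\Dom(\xi)$ while the other is constructor-rooted: there $u|_p$ is not a bare non-terminal but a whole subterm generated by $\CodingNT{A}{\ApplySubst{\xi_A}{t_2|_p}}{\bot}$ (resp.\ $\CodingNT{\ApplySubst{\xi_A}{t_1|_p}}{A}{\top}$), so $C$ must descend strictly below $p$, and on that portion its constructor skeleton must coincide with that of $\Coding{\applySubst{\xi}{x}}{\ApplySubst{\xi}{t_2|_p}}$ (resp.\ $\Coding{\ApplySubst{\xi}{t_1|_p}}{\applySubst{\xi}{y}}$). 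That this coincidence can be arranged --- by choosing, at each recursive step of the definition of $\CodingNT{\cdot}{\cdot}{\cdot}$, the constructor symbol matching the actual root on the other side, and terminating the descent exactly at the $\NT{A}{A}$, $\NT{A}{\bot}$ and $\NT{\bot}{A}$ leaves --- is precisely the content of the auxiliary terms $t'_1,u'$ (resp.\ $t'_2,u'$) and the two position conditions in Lemma~\ref{lem:CodingNT-subst}(b); threading these witnesses through uniformly, so that a single finite $C$ serves both $u$ and $\Coding{\applySubst{\theta\xi}{t_1}}{\applySubst{\theta\xi}{t_2}}$, is the delicate bookkeeping, while everything else reduces to the routine case inspection described above.
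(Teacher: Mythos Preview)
Your proposal is correct and follows essentially the same route as the paper, whose proof is the single line ``Using Lemma~\ref{lem:CodingNT-subst}, this lemma can be proved by structural induction on $t_1,t_2$.'' Your position-based extraction of the context is just an unfolding of that induction; your final paragraph correctly identifies that the only real work is in the mixed sub-cases, where $u$ must be \emph{chosen} (Lemma~\ref{lem:CodingNT-subst} quantifies universally over $u$ rather than supplying one, so your phrase ``fix the term $u$ it provides'' is slightly off) so that its constructor skeleton matches that of the actual instance --- exactly what the structural induction would arrange step by step via the existence claims for $u'$ in Lemma~\ref{lem:CodingNT-subst}(b).
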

\begin{proof}
Using Lemma~\ref{lem:CodingNT-subst}, this lemma can be proved by structural induction on $t_1,t_2$.
\qed	
\end{proof}
\begin{lemma}
\label{lem:CodingNT-no-recursion}
Let $\cG$ be an SSG $(\Gamma_{T_0},\cN,\cP)$, $\Gamma_T \in \cN$, $x_1,x_2 \in \Var(T)$, $\Range(\cG,T,x_1,x_2)$ be constructed, and
$\cG'=\Range(\cG,T,x_1,x_2)$.
Let $t_1,t_2 \in \Terms(\cC,\cV)$, $\xi \in \Subst(\cC)$ with $\Dom(\xi) \supseteq \Var(t_1,t_2)$, and $\xi_A = \{ x \mapsto A \mid x \in \Var(t_1,t_2) \}$.
Then, all of the following hold:
\begin{itemize}
	\item there exists a term $u \in \CodingNT{\applySubst{\xi_A}{t_1}}{\applySubst{\xi_A}{t_2}}{\top}$ such that $u \mathrel{\to^*_{\cG'}} \Coding{\applySubst{\xi}{t_1}}{\applySubst{\xi}{t_2}}$,
	\item there exists a term $u \in \CodingNT{\applySubst{\xi_A}{t_1}}{\bot}{\top}$ such that $u \mathrel{\to^*_{\cG'}} \Coding{\applySubst{\xi}{t_1}}{\bot}$,
	and
	\item there exists a term $u \in \CodingNT{\bot}{\applySubst{\xi_A}{t_2}}{\top}$ such that $u \mathrel{\to^*_{\cG'}} \Coding{\bot}{\applySubst{\xi}{t_2}}$.
\end{itemize}
\end{lemma}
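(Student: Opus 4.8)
The plan is to prove the three statements together by a nested induction, the inner ingredient being a self-contained fact about the three auxiliary non-terminals. First I would establish — this is exactly the fact recorded in the note after Definition~\ref{def:SSG-conversion} — that for all ground constructor terms $s_1,s_2 \in \Terms(\cC)$ we have $\NT{A}{A} \mathrel{\to^*_{\cG'}} \Coding{s_1}{s_2}$, $\NT{A}{\bot} \mathrel{\to^*_{\cG'}} \Coding{s_1}{\bot}$ and $\NT{\bot}{A} \mathrel{\to^*_{\cG'}} \Coding{\bot}{s_2}$. This goes by induction on $|s_1| + |s_2|$: writing $s_1 = \symb{f}(a_1,\ldots,a_m)$ and $s_2 = \symb{g}(b_1,\ldots,b_n)$, the set $\CodingNT{\symb{f}(A,\ldots,A)}{\symb{g}(A,\ldots,A)}{\top}$ from which the rules of $\cP_{\NT{A}{A}}$ are built contains the term $\Pairsymb{f}{g}(v_1,\ldots,v_{\max(m,n)})$ whose children are $\NT{A}{A}$ at positions below both roots and $\NT{A}{\bot}$ or $\NT{\bot}{A}$ at the padded positions; applying this rule to $\NT{A}{A}$ and the induction hypothesis to each child reproduces $\Coding{s_1}{s_2}$ exactly as it is read off from the definition of $\Coding{\cdot}{\cdot}$. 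The cases for $\cP_{\NT{A}{\bot}}$ and $\cP_{\NT{\bot}{A}}$ are analogous and simpler.

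The main work is a single well-founded induction proving, simultaneously, the following refinements of the three statements: for $s_1,s_2 \in \Terms(\cC\cup\{A\})$ and $w_1,w_2 \in \Terms(\cC)$ obtained from $s_1,s_2$ by replacing each occurrence of $A$ with an arbitrary ground constructor term (occurrences chosen independently), there are $u \in \CodingNT{s_1}{s_2}{\top}$, $u' \in \CodingNT{s_1}{\bot}{\top}$ and $u'' \in \CodingNT{\bot}{s_2}{\top}$ with $u \mathrel{\to^*_{\cG'}} \Coding{w_1}{w_2}$, $u' \mathrel{\to^*_{\cG'}} \Coding{w_1}{\bot}$ and $u'' \mathrel{\to^*_{\cG'}} \Coding{\bot}{w_2}$. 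The lemma is the instance $s_i = \applySubst{\xi_A}{t_i}$, $w_i = \applySubst{\xi}{t_i}$; here $w_i$ is a legal \emph{refinement} of $s_i$ precisely because $\xi_A$ abstracts exactly those variables of $t_i$ that $\xi$ instantiates. The induction is on $|s_1| + |s_2|$, counting $A$ as a leaf and $\bot$ as having size $0$, and the case analysis mirrors the recursive definition of $\CodingNT{\cdot}{\cdot}{\top}$.

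When both relevant arguments lie in $\{A,\bot\}$ the set $\CodingNT{\cdot}{\cdot}{\top}$ is one of the singletons $\{\NT{A}{A}\}$, $\{\NT{A}{\bot}\}$, $\{\NT{\bot}{A}\}$ and the claim is exactly the fact of the first paragraph. Otherwise one of $s_1,s_2$, say $s_1$, is not $\bot$; write $w_1 = \symb{f}(a_1,\ldots,a_m)$ with $\symb{f}/m\in\cC$, where $\symb{f} = \Root(s_1)$ if $s_1$ is rooted by a function symbol and $\symb{f}$ is arbitrary if $s_1 = A$ (this last freedom, which the clause defining $\CodingNT{A}{\cdot}{\top}$ accommodates by ranging over all $\symb{f}/m\in\cC$, is the reason for allowing the two sides to be refined independently), and likewise $w_2 = \symb{g}(b_1,\ldots,b_n)$ when $s_2 \ne \bot$; I would then pick the element of $\CodingNT{s_1}{s_2}{\top}$ built with these $\symb{f},\symb{g}$, apply the induction hypothesis to its children — which are drawn from $\CodingNT{s_1^{(i)}}{s_2^{(i)}}{\top}$, $\CodingNT{\bot}{s_2^{(j)}}{\top}$ or $\CodingNT{s_1^{(j)}}{\bot}{\top}$ on strictly smaller arguments, or from $\{\NT{A}{A}\}$, $\{\NT{A}{\bot}\}$, $\{\NT{\bot}{A}\}$ handled by the first paragraph — and check that the term so derived coincides with $\Coding{w_1}{w_2}$ by matching the arity case ($m < n$ versus $m \ge n$) of $\Coding{\cdot}{\cdot}$ with the corresponding case of $\CodingNT{\cdot}{\cdot}{\top}$.

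I expect the delicate points to be bookkeeping rather than a genuine obstacle: keeping the $\bot$-padding synchronised between the two mutually recursive definitions $\Coding{\cdot}{\cdot}$ and $\CodingNT{\cdot}{\cdot}{\top}$ across the two arity cases (including the harmless mismatch at $m = n$), and verifying in each subcase with $s_i = A$ that $\CodingNT{\cdot}{\cdot}{\top}$ really does offer a child in $\CodingNT{A}{A}{\top}$, $\CodingNT{A}{\bot}{\top}$ or $\CodingNT{\bot}{A}{\top}$ exactly where the first paragraph is invoked. The name of the lemma records the structural fact that makes all of this work: since every variable of $t_1,t_2$ is mapped to $A$ in $\applySubst{\xi_A}{t_1},\applySubst{\xi_A}{t_2}$, the set $\CodingNT{\applySubst{\xi_A}{t_1}}{\applySubst{\xi_A}{t_2}}{\top}$ and every set met while unfolding it contain only the non-terminals $\NT{A}{A}$, $\NT{A}{\bot}$, $\NT{\bot}{A}$ and never any $\Gamma_{T'}^{(\cdots)}$, so the derivation never re-enters the recursive part of $\cG'$ and stays entirely within $\cP_{\NT{A}{A}} \cup \cP_{\NT{A}{\bot}} \cup \cP_{\NT{\bot}{A}}$.
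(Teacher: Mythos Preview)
Your proposal is correct and follows essentially the same approach as the paper: structural induction on $t_1,t_2$ using the definitions of $\cP_{\NT{A}{A}}$, $\cP_{\NT{A}{\bot}}$, $\cP_{\NT{\bot}{A}}$ (the paper additionally cites Lemma~\ref{lem:CodingNT-subst}, whose relevant structural content your case analysis unfolds inline). Your preliminary fact about the languages of the three auxiliary non-terminals is exactly the note recorded after Definition~\ref{def:SSG-conversion}, and your slight generalization to independent refinements at each occurrence of $A$ is a natural strengthening that keeps the induction self-contained.
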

\begin{proof}
Using the definition of $\cP_{\NT{A}{A}}$, $\cP_{\NT{A}{\bot}}$, and $\cP_{\NT{\bot}{A}}$, and Lemma~\ref{lem:CodingNT-subst}, this lemma can be proved by structural induction on $t_1,t_2$.
\qed	
\end{proof}

\begin{theorem}
\label{thm:RTG-simplification}
Let $\cG$ be an SSG $(\Gamma_{T_0},\cN,\cP)$, $\Gamma_T \in \cN$, $x_1,x_2 \in \Var(T)$, and $\Range(\cG,T,x_1,x_2)$ be constructed (i.e., $\cP|_T$ satisfies the assumption).
Then, 
\[
\Lang(\Range(\cG,T,x_1,x_2)) \supseteq \{ \Coding{\applySubst{\xi\theta}{x_1}}{\applySubst{\xi\theta}{x_2}} \mid \theta \in \SubstSet{\cG}{\Gamma_T}, ~ \xi \in \Subst(\cC), ~ \Var(\applySubst{\theta}{x_1},\applySubst{\theta}{x_2}) = \Dom(\xi) \}.
\]
\end{theorem}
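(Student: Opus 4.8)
\emph{The plan.} It suffices to show, for every $\theta \in \SubstSet{\cG}{\Gamma_T}$ and every ground $\xi \in \Subst(\cC)$ with $\Dom(\xi) = \Var(\applySubst{\theta}{x_1},\applySubst{\theta}{x_2})$, that $\Coding{\applySubst{\xi\theta}{x_1}}{\applySubst{\xi\theta}{x_2}} \in \Lang(\cG',\Gamma_T^{(x_1,x_2)})$, where $\cG' = \Range(\cG,T,x_1,x_2)$. I would obtain this from the following strengthened statement, proved by induction on the length of a $\cG$-derivation $\Gamma_{T'} \mathrel{\to_\cG^*} e'$: for every $\Gamma_{T'} \in \cN$ reachable from $\Gamma_T$, every $e' \in \Lang(\cG,\Gamma_{T'})$ with $\theta' = \Semantics[Subst]{e'}{\Var(T')} \ne \Fail$, and every ground $\zeta \in \Subst(\cC)$: (i) for all $y_1,y_2 \in \Var(T')$ with $\Dom(\zeta) \supseteq \Var(\applySubst{\theta'}{y_1},\applySubst{\theta'}{y_2})$ we have $\Coding{\applySubst{\zeta\theta'}{y_1}}{\applySubst{\zeta\theta'}{y_2}} \in \Lang(\cG',\Gamma_{T'}^{(y_1,y_2)})$; and (ii) for all $y \in \Var(T')$, $t \in \Patterns{\cP} \cup \{\bot\}$, $\Dom(\zeta) \supseteq \Var(\applySubst{\theta'}{y})$, and every ground constructor term $w$ obtained from $t$ by replacing each occurrence of $A$ by a ground constructor term (with $w = \bot$ when $t = \bot$), both $\Coding{\applySubst{\zeta\theta'}{y}}{w} \in \Lang(\cG',\Gamma_{T'}^{(y,t)})$ and $\Coding{w}{\applySubst{\zeta\theta'}{y}} \in \Lang(\cG',\Gamma_{T'}^{(t,y)})$. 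The theorem is the instance $\Gamma_{T'} = \Gamma_T$, $(y_1,y_2) = (x_1,x_2)$, $\zeta = \xi$. The non-terminals $\NT{A}{A}$, $\NT{A}{\bot}$, $\NT{\bot}{A}$ are not part of this induction: Lemma~\ref{lem:CodingNT-no-recursion}, instantiated with variable-only arguments, already yields that they derive every ground coding of the respective form.

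\emph{Base case.} If the first production used is a constant alternative $\Rule{\Gamma_{T'}}{\theta_j}$ ($1 \le j \le m$), then $e' = \theta_j$ and $\theta' = \theta_j$ is a constructor substitution with domain $\Var(T')$. For a target $(y_1,y_2)$, Definition~\ref{def:SSG-conversion} places in $\cP_1'$ the rule $\Rule{\Gamma_{T'}^{(y_1,y_2)}}{u}$ for each $u \in \CodingNT{\applySubst{\xi_A\theta_j}{y_1}}{\applySubst{\xi_A\theta_j}{y_2}}{\top}$, where $\xi_A$ abstracts $\Var(\applySubst{\theta_j}{y_1},\applySubst{\theta_j}{y_2})$ to $A$; applying Lemma~\ref{lem:CodingNT-no-recursion} to the constructor terms $\applySubst{\theta_j}{y_1}$, $\applySubst{\theta_j}{y_2}$ and the grounding $\zeta$ yields one such $u$ with $u \mathrel{\to_{\cG'}^*} \Coding{\applySubst{\zeta\theta_j}{y_1}}{\applySubst{\zeta\theta_j}{y_2}}$, proving (i). Case (ii) follows in the same way once one writes a ground instance $w$ of $t$ as $\applySubst{\zeta'}{t'}$ for a constructor term $t'$ whose $A$-abstraction is $t$ and applies the (possibly $\bot$-) variant of Lemma~\ref{lem:CodingNT-no-recursion} to $\applySubst{\theta_j}{y}$, $t'$ and $\zeta \cup \zeta'$.

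\emph{Induction step.} If the first production is a recursive alternative $\Rule{\Gamma_{T'}}{\CALL(\Gamma_{T''},\delta) \ScompSym \theta_{m+i}}$, then $e' = \CALL(e'',\delta) \ScompSym \theta_{m+i}$ with a strictly shorter derivation $\Gamma_{T''} \mathrel{\to_\cG^*} e''$; put $\theta'' = \Semantics[Subst]{e''}{\Var(T'')}$ and $\vartheta = \RenameFresh{\theta''}{\delta}$. Unfolding the semantics of $\CALL$ and $\ScompSym$ and using $\VRange(\theta_{m+i}) \subseteq \Dom(\delta)$ and $\VRange(\delta) = \Var(T'')$ gives $\applySubst{\theta'}{y} = \applySubst{\vartheta}{(\applySubst{\delta\theta_{m+i}}{y})}$ for every $y$. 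The terms $s_k = \applySubst{\delta\theta_{m+i}}{y_k}$ are constructor terms whose variables split into those in $\Var(T'')$ (instantiated by $\vartheta$) and fresh ones outside $\Var(T'')$, and they satisfy the compatibility hypothesis of Lemmas~\ref{lem:CodingNT-subst} and~\ref{lem:CodingNT-context} precisely because $\cP|_{\Gamma_T}$ meets the syntactic assumption of Definition~\ref{def:SSG-conversion}. Splitting $\zeta$ into its restrictions $\zeta_T$ to $\VRange(\vartheta)$ and $\zeta_W$ to the fresh variables, Lemma~\ref{lem:CodingNT-context} (with its $\theta$ taken to be $(\zeta_T\vartheta)|_{\Var(T'')\cap\Var(s_1,s_2)}$ and its $\xi$ to $\zeta_W$) produces a context $C$ and a term $u \in \CodingNT{\applySubst{\xi_A\delta\theta_{m+i}}{y_1}}{\applySubst{\xi_A\delta\theta_{m+i}}{y_2}}{T''}$ — exactly a right-hand side placed in $\cP_2'$ for $\Gamma_{T'}^{(y_1,y_2)}$ by Definition~\ref{def:SSG-conversion} — with $u = C[\Gamma_1,\dots,\Gamma_n]$ and $\Coding{\applySubst{\zeta\theta'}{y_1}}{\applySubst{\zeta\theta'}{y_2}} = C[S_1,\dots,S_n]$, where each hole $\Gamma_k$ is either a non-terminal $\Gamma_{T''}^{(\cdot,\cdot)}$ with $S_k$ the matching coding of some $\applySubst{\zeta_T\vartheta}{z}$ ($z \in \Var(T'')$) against a ground instance of a pattern or of $\bot$ — to which clause (i) or (ii) of the induction hypothesis for $e''$ applies, after absorbing the renaming hidden in $\RenameFresh{\cdot}{\delta}$ into the grounding — or one of $\NT{A}{A}$, $\NT{A}{\bot}$, $\NT{\bot}{A}$ with $S_k$ an arbitrary ground coding handled by Lemma~\ref{lem:CodingNT-no-recursion}. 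Plugging the derivations $\Gamma_k \mathrel{\to_{\cG'}^*} S_k$ into $C$ gives $\Gamma_{T'}^{(y_1,y_2)} \mathrel{\to_{\cG'}} u \mathrel{\to_{\cG'}^*} \Coding{\applySubst{\zeta\theta'}{y_1}}{\applySubst{\zeta\theta'}{y_2}}$; clause (ii) is analogous.

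\emph{Main obstacle.} I expect the delicate point to be pinning down the strengthened induction hypothesis so that it is at once strong enough to cover all the shapes $\Gamma_{T'}^{(y_1,y_2)}$, $\Gamma_{T'}^{(y,t)}$, $\Gamma_{T'}^{(t,y)}$ (including the $\bot$ cases) and directly usable at the holes produced in the recursive step, together with verifying that the particular rule body $u$ delivered by Lemma~\ref{lem:CodingNT-context} is genuinely one of the bodies that Definition~\ref{def:SSG-conversion} places in $\cP_2'$. Keeping the three substitutions apart — the $\theta'$ generated by $\Gamma_{T'}$, its freshly renamed copy $\RenameFresh{\cdot}{\delta}$, and the grounding $\zeta$ — and discharging the domain/range bookkeeping ($\VRange(\theta_{m+i}) \subseteq \Dom(\delta)$, $\VRange(\delta) = \Var(T'')$, idempotence, the freshness conventions, and the fact that the compatibility condition of Lemma~\ref{lem:CodingNT-subst} is exactly the syntactic assumption on $\cP|_{\Gamma_T}$) is routine but must be done with care; everything structural is supplied by Lemmas~\ref{lem:CodingNT-subst}--\ref{lem:CodingNT-no-recursion}.
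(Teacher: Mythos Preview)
Your proposal is correct and follows essentially the same approach as the paper: both strengthen the claim to all non-terminals $\Gamma_{T'}^{(t_1,t_2)}$ with $t_1,t_2 \in \Var(T')\cup\Patterns{\cP}\cup\{\bot\}$ and at least one of $t_1,t_2$ a variable, prove it by induction on the length of the $\cG$-derivation $\Gamma_{T'}\to_\cG^* e$, handle the constant-substitution alternatives via Lemma~\ref{lem:CodingNT-no-recursion}, and in the recursive case invoke Lemma~\ref{lem:CodingNT-context} to obtain the context $C[\Gamma_1,\dots,\Gamma_n]$ whose holes are discharged by the induction hypothesis (for $\Gamma_{T''}^{(\cdot,\cdot)}$) or Lemma~\ref{lem:CodingNT-no-recursion} (for $\NT{A}{A}$, $\NT{A}{\bot}$, $\NT{\bot}{A}$). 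Your formulation of clause~(ii), making explicit that a pattern $t$ is instantiated to a ground term $w$ by replacing each $A$ with a ground constructor term, is arguably cleaner than the paper's, which writes $\applySubst{\xi\theta}{t}$ for $t\in\Patterns{\cP}$ without spelling out how $A$ is treated.
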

\begin{proof}
Let $\cG'=\Range(\cG,T,x_1,x_2)$.
It suffices to show that
for all $\Gamma_{T'} \in \cN$, $t_1,t_2 \in \Var(T')\cup\Patterns{\cP}\cup\{\bot\}$ with $\{t_1,t_2\}\cap\cV \ne \emptyset$, and $e \in \Lang(\cG,\Gamma_{T'})$ with
$\theta=\Semantics{e}{}$, we have $\Coding{\applySubst{\xi\theta}{t_1}}{\applySubst{\xi\theta}{t_2}} \in \Lang(\cG',\Gamma_{T'}^{(t_1,t_2)})$ for all substitutions $\xi \in \Subst(\cC)$ with $\Dom(\xi)=\Var(\applySubst{\xi\theta}{t_1},\applySubst{\xi\theta}{t_2})$.
We prove this claim by induction on the length of derivations from $\Gamma_{T'}$ to $e$.
We make a case distinction depending on which rule is applied at the first step.
\begin{itemize}
	\item The case where $\Rule{\Gamma_{T'}}{\theta}$ is applied.
By construction, we have the following production rule
$\Rule{\Gamma_{T'}^{(t_1,t_2)}}{u} \in \cG'$ for each $u \in \CodingNT{\applySubst{\xi_A\theta}{t_1}}{\applySubst{\xi_A\theta}{t_2}}{\top}$,
where $\xi_A = \{ x \mapsto A \mid x \in \Var(\applySubst{\xi_A\theta}{t_1},\applySubst{\xi_A\theta}{t_2}) \}$.
Then, the claim follows from Lemma~\ref{lem:CodingNT-no-recursion}.

	\item The remaining case where $\Rule{\Gamma_{T'}}{\CALL(\Gamma_{T''},\delta) \ScompSym \sigma}$ is applied.
		Suppose that $\Gamma_{T''} \mathrel{\to^*_{\cG'}} e'$ and $\theta = \Semantics{\CALL(e',\delta)\ScompSym \sigma}{}$.
		Let $\theta'=\Semantics{e'}{}$.
		Then, $\theta=(\theta'\delta)|_{\Dom(\delta)}\sigma$.
		By construction, we have the following production rule $\Rule{\Gamma_{T'}^{(t_1,t_2)}}{u} \in \cG'$ for each $u \in \CodingNT{\applySubst{\xi_A\delta\sigma}{t_1}}{\applySubst{\xi_A\delta\sigma}{t_2}}{T'}$.
		where $\xi_A = \{ x \mapsto A \mid x \in \Var(\applySubst{\delta\sigma}{t_1},\applySubst{\delta\sigma}{t_2})\setminus\Var(T'') \}$.
		By the assumption, $\Var(T'')=\VRange(\delta)$, and thus, $\Dom(\xi_A) \cap \VRange(\delta)=\emptyset$.
		Since $\xi_A$ is a ground substitution, we have that
		$
		\applySubst{\xi_A\delta\sigma}{t_i}
		=
		\applySubst{\delta\xi_A|_{\VRange(\sigma)\setminus\Dom(\delta)}\sigma}{t_i}
		$.
		Since $\xi$ is a ground substitution and $\Dom(\xi)\supseteq \VRange(\theta)=\VRange((\theta'\delta)|_{\Dom(\delta)}\sigma)$, we have that
		\begin{itemize}
			\item 
		$
		\applySubst{\xi\theta}{t_i}=\applySubst{\xi(\theta'\delta)|_{\Dom(\delta)}\sigma}{t_i}
		=
		\applySubst{\xi\theta'\delta\xi|_{\Dom(\VRange(\sigma)\setminus \Dom(\delta)}\sigma}{t_i}
		$,
		and
		\item 
		$
		\applySubst{\xi_A\delta\sigma}{t_i}
		=
		\applySubst{\xi_A\delta\xi_A|_{\Dom(\VRange(\sigma)\setminus \Dom(\delta)}\sigma}{t_i}
		$, 
		\end{itemize}
		and hence
		\[
		\CodingNT{\applySubst{\xi_A\delta\sigma}{t_1}}{\applySubst{\xi_A\delta\sigma}{t_2}}{T''}
		=
		\CodingNT{\applySubst{\xi_A\delta\xi_A|_{\Dom(\VRange(\sigma)\setminus \Dom(\delta)}\sigma}{t_1}}{\applySubst{\xi_A\delta\xi_A|_{\Dom(\VRange(\sigma)\setminus \Dom(\delta)}\sigma}{t_2}}{T''}
		.\]
		It follows from Lemma~\ref{lem:CodingNT-context} that there exists a term $u \in \CodingNT{\applySubst{\xi_A\delta\sigma}{t_1}}{\applySubst{\xi_A\delta\sigma}{t_2}}{T''}$, a context $C[\,] \in \Terms((\cC\cup\{\bot\})^2\cup\{\Hole\})$, terms $S_1,\ldots,S_n$, and non-terminals $\Gamma_1,\ldots,\Gamma_n$ such that
		$\Coding{\applySubst{\xi\theta}{t_1}}{\applySubst{\xi\theta}{t_2}} 
		= C[S_1,\ldots,S_n]$,
		$u=C[\Gamma_1,\ldots,\Gamma_n]$,
		and
	for all $1 \leq i \leq n$, all of the following hold:
	\begin{itemize}
		\item $S_i=\Pairsymb{\var{x}}{\var{y}}$ if and only if\/ $\Gamma_i=\Gamma_{T''}^{(x,y)}$,
		\item $S_i=\Coding{\applySubst{\xi\theta}{x}}{\ApplySubst{\xi\theta}{t_2|_p}}$ if and only if\/ $\Gamma_i=\Gamma_{T''}^{(x,\ApplySubst{\xi_A}{t_2|_p})}$ for some $p \in \Pos(t_2)$,
		\item $S_i=\Coding{\applySubst{\xi\theta}{x}}{\bot}$ if and only if\/ $\Gamma_i=\Gamma_{T''}^{(x,\bot)}$,
		\item $S_i=\Coding{\ApplySubst{\xi\theta}{t_1|_p}}{\applySubst{\xi\theta}{y}}$ if and only if\/ $\Gamma_i=\Gamma_{T''}^{(\ApplySubst{\xi_A}{t_1|_p},y)}$ for some $p \in \Pos(t_1)$,
		\item $S_i=\Coding{\bot}{\applySubst{\xi\theta}{y}}$ if and only if\/ $\Gamma_i=\Gamma_{T''}^{(\bot,y)}$,
		\item $S_i=\Coding{\ApplySubst{\xi\theta}{t_1|_p}}{\ApplySubst{\xi\theta}{t_2|_p}}$ for some $p \in \Pos(\applySubst{\xi\theta}{t_1})\cap\Pos(\applySubst{\xi\theta}{t_2})$ if and only if\/ $\Gamma_i=\NT{A}{A}$,
		\item $S_i=\Coding{\ApplySubst{\xi\theta}{t_1|_p}}{\bot}$ for some $p \in \Pos(\applySubst{\xi\theta}{t_1})\setminus\Pos(\applySubst{\xi\theta}{t_2})$ if and only if\/ $\Gamma_i=\NT{A}{\bot}$,
		and
		\item $S_i=\Coding{\bot}{\ApplySubst{\xi\theta}{t_2|_p}}$ for some $p \in \Pos(\applySubst{\xi\theta}{t_2})\setminus\Pos(\applySubst{\xi\theta}{t_1})$ if and only if\/ $\Gamma_i=\NT{\bot}{A}$.
	\end{itemize}
In the case where $\Gamma_i$ is $\Gamma_{T''}^{(x,y)}$, $\Gamma_{T''}^{(x,\ApplySubst{\xi_A}{t_2|_p})}$, $\Gamma_{T''}^{(x,\bot)}$, $\Gamma_{T''}^{(\ApplySubst{\xi_A}{t_1|_p},y)}$, or $\Gamma_{T''}^{(\bot,y)}$, it follows from the induction hypothesis that $\Gamma_i \mathrel{\to^*_{\cG'}} S_i$.
In the remaining case where $\Gamma_i$ is $\NT{A}{A}$, $\NT{A}{\bot}$, or $\NT{\bot}{A}$, it follows from Lemma~\ref{lem:CodingNT-no-recursion} that $\Gamma_i \mathrel{\to^*_{\cG'}} S_i$.
Therefore, we have that
$
	\Gamma_{T'}^{(t_1,t_2)} \mathrel{\to_{\cG'}} u 
	= C[\Gamma_1,\ldots,\Gamma_n]
	\mathrel{\to^*_{\cG'}}
	C[S_1,\ldots,S_n] = \Coding{\applySubst{\xi\theta}{t_1}}{\applySubst{\xi\theta}{t_2}}
$,
	and hence, $\Coding{\applySubst{\xi\theta}{t_1}}{\applySubst{\xi\theta}{t_2}} \in \Lang(\cG',\Gamma_{T'}^{(t_1,t_2)})$.
\qed
\end{itemize}
\end{proof}
The converse inclusion (i.e., $\Lang(\Range(\cG,T,x_1,x_2)) \subseteq \{ \Coding{\applySubst{\xi\theta}{x_1}}{\applySubst{\xi\theta}{x_2}} \mid \theta \in \SubstSet{\cG}{\Gamma_T}, ~
	\ldots
 \}$)
does not hold in general (cf.~\cite[Example~31]{NM18fscd}).

\section{Conclusion}
\label{sec:conclusion}

In this paper, under a certain syntactic condition, we showed a transformation of the grammar representation of a narrowing tree into an RTG that overapproximately generates the ranges of ground substitutions generated by the grammar representation.
We showed a precise definition of the transformation and proved that the language of the transformed RTG is an overapproximation of the ranges of ground substitutions generated by the grammar representation.
We will make an experiment to evaluate the usefulness of the transformation in e.g., proving confluence of CTRSs.

The syntactic assumption in Section~\ref{sec:transformation} is a sufficient condition to, given an SSG, obtain an RTG that generates the ranges of ground substitutions generated by the SSG.
It is not known yet whether the assumption is a necessary condition or not.
We will try to clarify this point.

As stated in Section~\ref{sec:transformation},  the converse inclusion of Theorem~\ref{thm:RTG-simplification}, $\Lang(\Range(\cG,T,x_1,x_2)) \subseteq \{ \Coding{\applySubst{\xi\theta}{x_1}}{\applySubst{\xi\theta}{x_2}} \mid \theta \in \SubstSet{\cG}{\Gamma_T}, ~
	\ldots
 \}$, does not hold in general.
However, the converse inclusion must hold for an SSG such that all substitutions in the SSG preserve linearity, i.e.,
for any substitution $\sigma$ in the SSG, $\applySubst{\sigma}{x}$ is linear for all $x \in \Dom(\sigma)$, and $\Var(\applySubst{\sigma}{x})\cap\Var(\applySubst{\sigma}{y})=\emptyset$ for all $x,y \in \Dom(\sigma)$ such that $x\ne y$.
We will prove this conjecture and try to find other sufficient conditions for the converse inclusion.

\paragraph{Acknowledgements}
We gratefully acknowledge the anonymous reviewers for their useful comments and suggestions to improve the paper.


\begin{thebibliography}{10}
\providecommand{\bibitemdeclare}[2]{}
\providecommand{\surnamestart}{}
\providecommand{\surnameend}{}
\providecommand{\urlprefix}{Available at }
\providecommand{\url}[1]{\texttt{#1}}
\providecommand{\href}[2]{\texttt{#2}}
\providecommand{\urlalt}[2]{\href{#1}{#2}}
\providecommand{\doi}[1]{doi:\urlalt{http://dx.doi.org/#1}{#1}}
\providecommand{\bibinfo}[2]{#2}

\bibitemdeclare{book}{BN98}
\bibitem{BN98}
\bibinfo{author}{Franz \surnamestart Baader\surnameend} \&
  \bibinfo{author}{Tobias \surnamestart Nipkow\surnameend}
  (\bibinfo{year}{1998}): \emph{\bibinfo{title}{Term Rewriting and All That}}.
\newblock \bibinfo{publisher}{Cambridge University Press},
  \doi{10.1017/CBO9781139172752}.

\bibitemdeclare{misc}{TATA}
\bibitem{TATA}
\bibinfo{author}{Hubert \surnamestart Comon\surnameend}, \bibinfo{author}{Max
  \surnamestart Dauchet\surnameend}, \bibinfo{author}{R\'emi \surnamestart
  Gilleron\surnameend}, \bibinfo{author}{Florent \surnamestart
  Jacquemard\surnameend}, \bibinfo{author}{Denis \surnamestart
  Lugiez\surnameend}, \bibinfo{author}{Christof \surnamestart
  L\"oding\surnameend}, \bibinfo{author}{Sophie \surnamestart Tison\surnameend}
  \& \bibinfo{author}{Marc \surnamestart Tommasi\surnameend}
  (\bibinfo{year}{2007}): \emph{\bibinfo{title}{{Tree Automata Techniques and
  Applications}}}.
\newblock \bibinfo{howpublished}{Available on:
  \url{http://www.grappa.univ-lille3.fr/tata}}.
\newblock \bibinfo{note}{Release October, 12th 2007}.

\bibitemdeclare{inproceedings}{DOS88}
\bibitem{DOS88}
\bibinfo{author}{Nachum \surnamestart Dershowitz\surnameend},
  \bibinfo{author}{Mitsuhiro \surnamestart Okada\surnameend} \&
  \bibinfo{author}{G.~\surnamestart Sivakumar\surnameend}
  (\bibinfo{year}{1988}): \emph{\bibinfo{title}{Canonical Conditional Rewrite
  Systems}}.
\newblock In: {\sl \bibinfo{booktitle}{Proceedings of the 9th International
  Conference on Automated Deduction}}, {\sl \bibinfo{series}{Lecture Notes in
  Computer Science}} \bibinfo{volume}{310}, \bibinfo{publisher}{Springer}, pp.
  \bibinfo{pages}{538--549}, \doi{10.1007/BFb0012855}.

\bibitemdeclare{inproceedings}{DLMMU04}
\bibitem{DLMMU04}
\bibinfo{author}{Francisco \surnamestart Dur{\'a}n\surnameend},
  \bibinfo{author}{Salvador \surnamestart Lucas\surnameend},
  \bibinfo{author}{Jos{\'e} \surnamestart Meseguer\surnameend},
  \bibinfo{author}{Claude \surnamestart March{\'e}\surnameend} \&
  \bibinfo{author}{Xavier \surnamestart Urbain\surnameend}
  (\bibinfo{year}{2004}): \emph{\bibinfo{title}{Proving termination of
  membership equational programs}}.
\newblock In: {\sl \bibinfo{booktitle}{Proceedings of the 2004 ACM SIGPLAN
  Workshop on Partial Evaluation and Semantics-based Program Manipulation}},
  \bibinfo{publisher}{ACM}, pp. \bibinfo{pages}{147--158},
  \doi{10.1145/1014007.1014022}.

\bibitemdeclare{article}{FG03}
\bibitem{FG03}
\bibinfo{author}{Guillaume \surnamestart Feuillade\surnameend} \&
  \bibinfo{author}{Thomas \surnamestart Genet\surnameend}
  (\bibinfo{year}{2003}): \emph{\bibinfo{title}{Reachability in Conditional
  Term Rewriting Systems}}.
\newblock {\sl \bibinfo{journal}{Electronic Notes in Theoretical Computer
  Science}} \bibinfo{volume}{86}(\bibinfo{number}{1}), pp.
  \bibinfo{pages}{133--146}, \doi{10.1016/S1571-0661(04)80658-3}.

\bibitemdeclare{article}{GR10}
\bibitem{GR10}
\bibinfo{author}{Thomas \surnamestart Genet\surnameend} \&
  \bibinfo{author}{Vlad \surnamestart Rusu\surnameend} (\bibinfo{year}{2010}):
  \emph{\bibinfo{title}{Equational approximations for tree automata
  completion}}.
\newblock {\sl \bibinfo{journal}{Journal of Symbolic Computation}}
  \bibinfo{volume}{45}(\bibinfo{number}{5}), pp. \bibinfo{pages}{574--597},
  \doi{10.1016/j.jsc.2010.01.009}.

\bibitemdeclare{inproceedings}{Timbuk01}
\bibitem{Timbuk01}
\bibinfo{author}{Thomas \surnamestart Genet\surnameend} \&
  \bibinfo{author}{Val{\'{e}}rie Viet~Triem \surnamestart Tong\surnameend}
  (\bibinfo{year}{2001}): \emph{\bibinfo{title}{Reachability Analysis of Term
  Rewriting Systems with {Timbuk}}}.
\newblock In \bibinfo{editor}{Robert \surnamestart Nieuwenhuis\surnameend} \&
  \bibinfo{editor}{Andrei \surnamestart Voronkov\surnameend}, editors: {\sl
  \bibinfo{booktitle}{Proceedings of the 8th International Conference on Logic
  for Programming, Artificial Intelligence, and Reasoning}}, {\sl
  \bibinfo{series}{Lecture Notes in Computer Science}} \bibinfo{volume}{2250},
  \bibinfo{publisher}{Springer}, pp. \bibinfo{pages}{695--706},
  \doi{10.1007/3-540-45653-8\_48}.

\bibitemdeclare{inproceedings}{AProVE}
\bibitem{AProVE}
\bibinfo{author}{J{\"{u}}rgen \surnamestart Giesl\surnameend},
  \bibinfo{author}{Peter \surnamestart Schneider{-}Kamp\surnameend} \&
  \bibinfo{author}{Ren{\'{e}} \surnamestart Thiemann\surnameend}
  (\bibinfo{year}{2006}): \emph{\bibinfo{title}{{AProVE 1.2}: Automatic
  Termination Proofs in the Dependency Pair Framework}}.
\newblock In: {\sl \bibinfo{booktitle}{Proceedings of the 3rd International
  Joint Conference on Automated Reasoning}}, {\sl \bibinfo{series}{Lecture
  Notes in Computer Science}} \bibinfo{volume}{4130},
  \bibinfo{publisher}{Springer}, pp. \bibinfo{pages}{281--286},
  \doi{10.1007/11814771\_24}.

\bibitemdeclare{inproceedings}{coscart-2015}
\bibitem{coscart-2015}
\bibinfo{author}{Karl \surnamestart Gmeiner\surnameend} (\bibinfo{year}{2015}):
  \emph{\bibinfo{title}{{CoScart}: Confluence Prover in {Scala}}}.
\newblock In: {\sl \bibinfo{booktitle}{Proceedings of the 4th International
  Workshop on Confluence}}, p.~\bibinfo{pages}{45}.

\bibitemdeclare{inproceedings}{GN14wpte}
\bibitem{GN14wpte}
\bibinfo{author}{Karl \surnamestart Gmeiner\surnameend} \&
  \bibinfo{author}{Naoki \surnamestart Nishida\surnameend}
  (\bibinfo{year}{2014}): \emph{\bibinfo{title}{Notes on Structure-Preserving
  Transformations of Conditional Term Rewrite Systems}}.
\newblock In: {\sl \bibinfo{booktitle}{Proceedings of the first International
  Workshop on Rewriting Techniques for Program Transformations and
  Evaluation}}, {\sl \bibinfo{series}{OpenAccess Series in
  Informatics}}~\bibinfo{volume}{40}, \bibinfo{publisher}{Schloss Dagstuhl --
  Leibniz-Zentrum f\"ur Informatik}, pp. \bibinfo{pages}{3--14},
  \doi{10.4230/OASIcs.WPTE.2014.3}.

\bibitemdeclare{inproceedings}{GNG13iwc}
\bibitem{GNG13iwc}
\bibinfo{author}{Karl \surnamestart Gmeiner\surnameend}, \bibinfo{author}{Naoki
  \surnamestart Nishida\surnameend} \& \bibinfo{author}{Bernhard \surnamestart
  Gramlich\surnameend} (\bibinfo{year}{2013}): \emph{\bibinfo{title}{Proving
  Confluence of Conditional Term Rewriting Systems via Unravelings}}.
\newblock In: {\sl \bibinfo{booktitle}{Proceedings of the 2nd International
  Workshop on Confluence}}, pp. \bibinfo{pages}{35--39}.

\bibitemdeclare{inproceedings}{GLR16}
\bibitem{GLR16}
\bibinfo{author}{Ra\'ul \surnamestart Guti\'errez\surnameend},
  \bibinfo{author}{Salvador \surnamestart Lucas\surnameend} \&
  \bibinfo{author}{Patricio \surnamestart Reinoso\surnameend}
  (\bibinfo{year}{2016}): \emph{\bibinfo{title}{A tool for the automatic
  generation of logical models of order-sorted first-order theories}}.
\newblock In: {\sl \bibinfo{booktitle}{Proceedings of the XVI Jornadas sobre
  Programaci\'on y Lenguages}}, pp. \bibinfo{pages}{215--230}.
\newblock \bibinfo{note}{Tool available at
  \url{http://zenon.dsic.upv.es/ages/}}.

\bibitemdeclare{inproceedings}{HR89}
\bibitem{HR89}
\bibinfo{author}{Manuel~V. \surnamestart Hermenegildo\surnameend} \&
  \bibinfo{author}{Francesca \surnamestart Rossi\surnameend}
  (\bibinfo{year}{1989}): \emph{\bibinfo{title}{On the Correctness and
  Efficiency of Independent And-Parallelism in Logic Programs}}.
\newblock In: {\sl \bibinfo{booktitle}{Proceedings of the North American
  Conference on Logic Programming}}, \bibinfo{publisher}{{MIT} Press}, pp.
  \bibinfo{pages}{369--389}.

\bibitemdeclare{inproceedings}{Hul80}
\bibitem{Hul80}
\bibinfo{author}{Jean{-}Marie \surnamestart Hullot\surnameend}
  (\bibinfo{year}{1980}): \emph{\bibinfo{title}{Canonical Forms and
  Unification}}.
\newblock In: {\sl \bibinfo{booktitle}{Proceedings of the 5th Conference on
  Automated Deduction}}, {\sl \bibinfo{series}{Lecture Notes in Computer
  Science}}~\bibinfo{volume}{87}, \bibinfo{publisher}{Springer}, pp.
  \bibinfo{pages}{318--334}, \doi{10.1007/3-540-10009-1\_25}.

\bibitemdeclare{inproceedings}{Luc17lopstr}
\bibitem{Luc17lopstr}
\bibinfo{author}{Salvador \surnamestart Lucas\surnameend}
  (\bibinfo{year}{2018}): \emph{\bibinfo{title}{A Semantic Approach to the
  Analysis of Rewriting-Based Systems}}.
\newblock In: {\sl \bibinfo{booktitle}{Revised Selected Papers of the 27th
  International Symposium on Logic-Based Program Synthesis and
  Transformation}}, {\sl \bibinfo{series}{Lecture Notes in Computer Science}}
  \bibinfo{volume}{10855}, \bibinfo{publisher}{Springer}, pp.
  \bibinfo{pages}{180--97}, \doi{10.1007/978-3-319-94460-9\_11}.

\bibitemdeclare{inproceedings}{Luc17iwc}
\bibitem{Luc17iwc}
\bibinfo{author}{Salvador \surnamestart Lucas\surnameend} \&
  \bibinfo{author}{Ra\'ul \surnamestart Guti\'errez\surnameend}
  (\bibinfo{year}{2017}): \emph{\bibinfo{title}{A Semantic Criterion for
  Proving Infeasibility in Conditional Rewriting}}.
\newblock In: {\sl \bibinfo{booktitle}{Proceedings of the 6th International
  Workshop on Confluence}}, pp. \bibinfo{pages}{15--20}.

\bibitemdeclare{article}{LMM05}
\bibitem{LMM05}
\bibinfo{author}{Salvador \surnamestart Lucas\surnameend},
  \bibinfo{author}{Claude \surnamestart March{\'{e}}\surnameend} \&
  \bibinfo{author}{Jos{\'{e}} \surnamestart Meseguer\surnameend}
  (\bibinfo{year}{2005}): \emph{\bibinfo{title}{Operational termination of
  conditional term rewriting systems}}.
\newblock {\sl \bibinfo{journal}{Information Processing Letters}}
  \bibinfo{volume}{95}(\bibinfo{number}{4}), pp. \bibinfo{pages}{446--453},
  \doi{10.1016/j.ipl.2005.05.002}.

\bibitemdeclare{inproceedings}{Mar96}
\bibitem{Mar96}
\bibinfo{author}{Massimo \surnamestart Marchiori\surnameend}
  (\bibinfo{year}{1996}): \emph{\bibinfo{title}{Unravelings and
  Ultra-properties}}.
\newblock In: {\sl \bibinfo{booktitle}{Proceedings of the 5th International
  Conference on Algebraic and Logic Programming}}, {\sl
  \bibinfo{series}{Lecture Notes in Computer Science}} \bibinfo{volume}{1139},
  \bibinfo{publisher}{Springer}, pp. \bibinfo{pages}{107--121},
  \doi{10.1007/3-540-61735-3\_7}.

\bibitemdeclare{article}{MH94}
\bibitem{MH94}
\bibinfo{author}{Aart \surnamestart Middeldorp\surnameend} \&
  \bibinfo{author}{Erik \surnamestart Hamoen\surnameend}
  (\bibinfo{year}{1994}): \emph{\bibinfo{title}{Completeness Results for Basic
  Narrowing}}.
\newblock {\sl \bibinfo{journal}{Applicable Algebra in Engineering,
  Communication and Computing}} \bibinfo{volume}{5}, pp.
  \bibinfo{pages}{213--253}, \doi{10.1007/BF01190830}.

\bibitemdeclare{inproceedings}{co3-2015}
\bibitem{co3-2015}
\bibinfo{author}{Naoki \surnamestart Nishida\surnameend},
  \bibinfo{author}{Takayuki \surnamestart Kuroda\surnameend},
  \bibinfo{author}{Makishi \surnamestart Yanagisawa\surnameend} \&
  \bibinfo{author}{Karl \surnamestart Gmeiner\surnameend}
  (\bibinfo{year}{2015}): \emph{\bibinfo{title}{{CO3: a COnverter for proving
  COnfluence of COnditional TRSs}}}.
\newblock In: {\sl \bibinfo{booktitle}{Proceedings of the 4th International
  Workshop on Confluence}}, p.~\bibinfo{pages}{42}.

\bibitemdeclare{inproceedings}{NM18fscd}
\bibitem{NM18fscd}
\bibinfo{author}{Naoki \surnamestart Nishida\surnameend} \&
  \bibinfo{author}{Yuya \surnamestart Maeda\surnameend} (\bibinfo{year}{2018}):
  \emph{\bibinfo{title}{Narrowing Trees for Syntactically Deterministic
  Conditional Term Rewriting Systems}}.
\newblock In: {\sl \bibinfo{booktitle}{Proceedings of the 3rd International
  Conference on Formal Structures for Computation and Deduction}}, {\sl
  \bibinfo{series}{Leibniz International Proceedings in Informatics}}
  \bibinfo{volume}{108}, \bibinfo{publisher}{Schloss Dagstuhl --
  Leibniz-Zentrum f\"ur Informatik}, pp. \bibinfo{pages}{26:1--26:20},
  \doi{10.4230/LIPIcs.FSCD.2018.26}.

\bibitemdeclare{inproceedings}{NV12lopstr}
\bibitem{NV12lopstr}
\bibinfo{author}{Naoki \surnamestart Nishida\surnameend} \&
  \bibinfo{author}{Germ{\'{a}}n \surnamestart Vidal\surnameend}
  (\bibinfo{year}{2013}): \emph{\bibinfo{title}{Computing More Specific
  Versions of Conditional Rewriting Systems}}.
\newblock In: {\sl \bibinfo{booktitle}{Revised Selected Papers of the 22nd
  International Symposium on Logic-Based Program Synthesis and
  Transformation}}, {\sl \bibinfo{series}{Lecture Notes in Computer Science}}
  \bibinfo{volume}{7844}, \bibinfo{publisher}{Springer}, pp.
  \bibinfo{pages}{137--154}, \doi{10.1007/978-3-642-38197-3\_10}.

\bibitemdeclare{inproceedings}{NV13}
\bibitem{NV13}
\bibinfo{author}{Naoki \surnamestart Nishida\surnameend} \&
  \bibinfo{author}{Germ{\'{a}}n \surnamestart Vidal\surnameend}
  (\bibinfo{year}{2014}): \emph{\bibinfo{title}{A Finite Representation of the
  Narrowing Space}}.
\newblock In: {\sl \bibinfo{booktitle}{Revised Selected Papers of the 23rd
  International Symposium on Logic-Based Program Synthesis and
  Transformation}}, {\sl \bibinfo{series}{Lecture Notes in Computer Science}}
  \bibinfo{volume}{8901}, \bibinfo{publisher}{Springer}, pp.
  \bibinfo{pages}{54--71}, \doi{10.1007/978-3-319-14125-1\_4}.

\bibitemdeclare{article}{NV15}
\bibitem{NV15}
\bibinfo{author}{Naoki \surnamestart Nishida\surnameend} \&
  \bibinfo{author}{Germ{\'{a}}n \surnamestart Vidal\surnameend}
  (\bibinfo{year}{2015}): \emph{\bibinfo{title}{A framework for computing
  finite {SLD} trees}}.
\newblock {\sl \bibinfo{journal}{Journal of Logic and Algebraic Methods in
  Programming}} \bibinfo{volume}{84}(\bibinfo{number}{2}), pp.
  \bibinfo{pages}{197--217}, \doi{10.1016/j.jlamp.2014.11.006}.

\bibitemdeclare{book}{Ohl02}
\bibitem{Ohl02}
\bibinfo{author}{Enno \surnamestart Ohlebusch\surnameend}
  (\bibinfo{year}{2002}): \emph{\bibinfo{title}{Advanced Topics in Term
  Rewriting}}.
\newblock \bibinfo{publisher}{Springer}, \doi{10.1007/978-1-4757-3661-8}.

\bibitemdeclare{inproceedings}{Pal90}
\bibitem{Pal90}
\bibinfo{author}{Catuscia \surnamestart Palamidessi\surnameend}
  (\bibinfo{year}{1990}): \emph{\bibinfo{title}{Algebraic Properties of
  Idempotent Substitutions}}.
\newblock In: {\sl \bibinfo{booktitle}{Proceedings of the 17th International
  Colloquium on Automata, Languages and Programming}}, {\sl
  \bibinfo{series}{Lecture Notes in Computer Science}} \bibinfo{volume}{443},
  \bibinfo{publisher}{Springer}, pp. \bibinfo{pages}{386--399},
  \doi{10.1007/BFb0032046}.

\bibitemdeclare{article}{Sla74}
\bibitem{Sla74}
\bibinfo{author}{James~R. \surnamestart Slagle\surnameend}
  (\bibinfo{year}{1974}): \emph{\bibinfo{title}{Automated Theorem-Proving for
  Theories with Simplifiers, Commutativity and Associativity}}.
\newblock {\sl \bibinfo{journal}{Journal of the ACM}}
  \bibinfo{volume}{21}(\bibinfo{number}{4}), pp. \bibinfo{pages}{622--642},
  \doi{10.1145/321850.321859}.

\bibitemdeclare{inproceedings}{concon-2014}
\bibitem{concon-2014}
\bibinfo{author}{Thomas \surnamestart Sternagel\surnameend} \&
  \bibinfo{author}{Aart \surnamestart Middeldorp\surnameend}
  (\bibinfo{year}{2014}): \emph{\bibinfo{title}{Conditional Confluence (System
  Description)}}.
\newblock In: {\sl \bibinfo{booktitle}{Proceedings of the Joint International
  Conference on Rewriting and Typed Lambda Calculi}}, {\sl
  \bibinfo{series}{Lecture Notes in Computer Science}} \bibinfo{volume}{8560},
  \bibinfo{publisher}{Springer}, pp. \bibinfo{pages}{456--465},
  \doi{10.1007/978-3-319-08918-8\_31}.

\bibitemdeclare{inproceedings}{SMI95}
\bibitem{SMI95}
\bibinfo{author}{Taro \surnamestart Suzuki\surnameend}, \bibinfo{author}{Aart
  \surnamestart Middeldorp\surnameend} \& \bibinfo{author}{Tetsuo \surnamestart
  Ida\surnameend} (\bibinfo{year}{1995}):
  \emph{\bibinfo{title}{Level-Confluence of Conditional Rewrite Systems with
  Extra Variables in Right-Hand Sides}}.
\newblock In: {\sl \bibinfo{booktitle}{Proceedings of the 6th International
  Conference on Rewriting Techniques and Applications}}, {\sl
  \bibinfo{series}{Lecture Notes in Computer Science}} \bibinfo{volume}{914},
  \bibinfo{publisher}{Springer}, pp. \bibinfo{pages}{179--193},
  \doi{10.1007/3-540-59200-8\_56}.

\end{thebibliography}


\end{document}